\documentclass[runningheads]{llncs}
\usepackage[T1]{fontenc}
\usepackage{graphicx}
\usepackage{amssymb}
\usepackage{physics}
\usepackage{algorithm}
\usepackage{ wasysym }
\usepackage{algpseudocode}
\newcommand{\Endproof}{\hfill$\Box$\\}
\begin{document}
\title{Circuits of Quantum Hashing and Quantum Fourier Transform for a Cactus as a Qubit Connectivity Graph\thanks{The research has been supported by Russian Science Foundation Grant 24-21-00406, \url{ https://rscf.ru/en/project/24-21-00406/}. The study in Section 3.2 was performed under the development programme of the Volga Region Mathematical Center (agreement no. 075-02-2026-1328).}}
%
\titlerunning{Circuits of QH and QFT for a Cactus as a Qubit Connectivity Graph}
%
\author{
Kamil Khadiev\inst{1,2}\orcidID{0000-0002-5151-9908}\and Ilnur Valeev\inst{3}}
\authorrunning{K.Khadiev and I. Valeev}
%
\institute{Institute of Computational Mathematics and Information Technologies, Kazan Federal University, Kazan, Tatarstan, Russia \and
Zavoisky Physical-Technical Institute,
FRC Kazan Scientific Center of RAS, Kazan, Tatarstan, Russia\and
N.I. Lobachevsky Institute of Mathematics and Mechanics, Kazan Federal University, Kazan, Tatarstan, Russia\\
\email{kamilhadi@gmail.com}}
\maketitle
\begin{abstract}
We present a quantum circuit implementation of the quantum hashing algorithm (quantum fingerprinting) for a quantum device with restrictions on the application of two-qubit gates by a qubit connectivity graph. We present an optimization technique for the shallow circuit for quantum hashing in the case of a cactus as a qubit connectivity graph. The algorithm has $O(n^3)$ complexity to build the circuit, where $n$ is the number of qubits and $m$ is the number of connections (edges) in the graph. It is improvement compared to the existing exponential-time algorithm in the case of arbitrary graphs.  The algorithm uses solution for the shortest non-simple 1-covering path problem as a subroutine. We present an $O(n^3)$-time solution for this graph-theory problem in the case of a cactus. This result can be interesting independently. The algorithm also used for improving of the quantum circuit for Quantum Fourier Transform.

%
\keywords{quantum hashing  \and quantum fingerprinting  \and quantum circuit\and  cactus \and covering path \and QFT \and quantum Fourier transform}

\end{abstract}
%
%
\section{Introduction}
Quantum computing \cite{nc2010,aazksw2019part1} is one of the hot topics in computer science in recent decades.
%
%
One of the techniques that allows us to develop space-efficient quantum algorithms is quantum fingerprinting or quantum hashing \cite{akvz2025}. 
This technique is well known, and it allows us to compute a short hash or fingerprint that identifies a (potentially long) string of data with high probability.
The probabilistic (randomized) technique was developed in \cite{Fre79}. Then, its quantum counterpart for automata was developed that was improved by Ambainis and Nahimovs in \cite{an2009}. Later, an explicit definition of the quantum fingerprinting to build an efficient quantum communication protocol for equality testing was provided \cite{bcwd2001}. The technique was applied for branching programs (non-uniform automata-like model) 
\cite{agkmp2005}. 
Later, the concept of cryptographic quantum hashing was developed
\cite{akvz2025}.
The technique was also extended for qudits \cite{av2022}.
This approach was widely used in different areas such as stream processing algorithms \cite{l2009}, query model algorithms
\cite{aaksv2022,asa2024}, 
online algorithms \cite{kk2019disj,kk2022,kkkmry2023,kkzmkry2022,kk2019online2w,kk2019,kkm2018}, branching programs \cite{aaksv2022,kkk2022}, 
developing quantum devices \cite{v2016model}, automata (discussed earlier, \cite{gy2017}),
machine learning  \cite{zfsd2025} and 
others.

The technique was implemented in a photon-based real quantum device \cite{avsak2022,tavak2021,phyf2022,zllzh2023}. At the same time, in all these implementations, the algorithm was embedded in the devices. If we consider ``universal'' quantum devices such as IBMQ quantum computers or similar, it is important to minimize the number of quantum gates in the circuit implementation of the algorithm with respect to the architecture restrictions.  Many types of quantum computers (for example, quantum devices based on superconductors) do not allow us to apply two-qubit gates to an arbitrary pair of qubits. They have a specific architecture of qubit connectivity that is represented by a qubit connectivity graph. Vertices of the graph correspond to qubits, and two-qubit gates can be applied only to qubits corresponding to vertices connected by an edge. 

In this paper, we are interested in minimizing the number of CNOT gates which we call the CNOT cost of the quantum circuit. The CNOT gate is a two-qubit gate that is a quantum analog of ``exclusive or'' operation for classical computation. This gate is the most difficult for physical implementation among the gates used for quantum fingerprinting. 
The standard implementation of quantum hashing circuits is a uniformly controlled rotation operator.
The CNOT cost (the number of CNOT gates) of this operator with respect to a qubit connectivity graph was previously discussed in \cite{bvms2005mottonen2006decompositions} 
for the linear nearest-neighbor (LNN) architecture and in \cite{zkk2023,zk2025icmne} for several examples of more complex graphs. The issue with the circuit is an exponential number of CNOT gates with respect to the number of qubits.
%
K\={a}lis \cite{kalis18} suggested a shallow circuit for an automaton with 3 qubits recognizing the unary language $MOD_p=\{a^\ell: \ell $ mod $p=0\}$ that is simple enough, but has all the required properties to demonstrate the quantum hashing (fingerprinting) algorithm. Later, in a general way, the approach was developed in \cite{ziiatdinov2023gaps,zkk2025}. This approach allows us to reduce the CNOT cost. Computational experiments show that we can find the required parameters of the hash function with the same number of qubits as for the standard circuit for the quantum hashing algorithm.  Therefore, the CNOT cost of the shallow circuit can be exponentially less than for the standard circuit. At the same time, the theoretical exponential superiority of the shallow circuit is not shown \cite{ziiatdinov2023gaps,zkk2025}. By the way, the method is very promising for current and near-future quantum devices that definitely cannot support a huge number of CNOT gates. The efficient implementation of a shallow circuit for an automaton that recognizes the $MOD_p$ language was discussed in \cite{ksy2024} for devices based on the LNN architecture, and for more complex architecture \cite{k2024aliya}, that is a cycle with tails (like a ``sun'' and  ``two joint suns'') represented by 16-qubit and 27-qubit Eagle r3 IBMQ architectures. A method to construct a circuit for an arbitrary connected graph was suggested \cite{kkcw2025}, but this method builds the circuit with exponential time complexity.
Vasiliev \cite{v2023} discussed a similar method, but for Rz gates instead of Ry gates. The effectiveness of the method was shown in computational experiments in \cite{kmak2025}.
In this paper, we present a method for building a quantum circuit for the quantum hashing algorithm for a wide class of qubit connectivity graphs. The class is cacti that are connected graphs in which any two simple cycles have at most one vertex in common. The CNOT cost of the constructed circuit is 
between $2n\ell -4\ell +2$ and $6n\ell -7\ell+2$ depending on the complexity of the graph,
where $n$ is the number of qubits and $\ell$ is the length of the input word.
Our algorithm uses a solution of the shortest non-simple covering path problem, which is a modification of the shortest covering path problem \cite{cpr1994}. We present a polynomial-time solution for the problem in the case of a cactus that can be independently interesting to the reader. The time complexity of the solution for the shortest non-simple covering path problem and constructing the circuit is $O(n^3)$, where $n$ is the number of vertices and $m$ is the number of edges.
The polynomial time complexity is an important advantage compared to existing methods \cite{kkcw2025} that have exponential time complexity $O((n+m)2^n)$. So, our new result allows us to use the method for practical important qubit connectivity graphs with $n\geq 40$. The situation when an NP-problem has a polynomial solution for cacti is not unique. An example of such a problem is the shortest path cover problem \cite{bdfp2007}. At the same time, this is a different problem and has another algorithm for cacti compared to the one presented in this paper.  

Another algorithm that has similar quantum circuit is the Quantum Fourier Transform (QFT) \cite{k1995}.
It is used in quantum addition \cite{d2000}, quantum phase estimation (QPE) \cite{k1995}, quantum amplitude estimation (QAE) \cite{bhmt2002}, the algorithm for solving linear systems of equations \cite{hhl2009}, Shor’s factoring algorithm \cite{s1999}, and others.
The closeness of QFT and quantum hashing and, particularly, their quantum circuits were discussed in \cite{k2024aliya,kkcw2025,av2020}.
The quantum circuit of QFT for specific graphs was discussed in \cite{park2023reducing} for LNN, in \cite{k2024aliya} for a cycle with tails (like a ``sun'' or ``two joint suns''), and in \cite{kkcw2025,kksk2026} for arbitrary qubit connectivity graphs.
The best result that can be applied for cacti is \cite{kksk2026} that uses the path similar to the shortest non-simple covering path problem. At the same time, the algorithm from \cite{kksk2026} has exponential time complexity $O((m+n)2^n)$ or polynomial time complexity for an approximate algorithm. So, our result uses the same algorithm to generate the circuit using the path. At the same time, our algorithm for a cactus is exact (not approximate) and has polynomial time complexity.

The structure of this paper is the following.
Section \ref{sec:prelims} describes the required notations and preliminaries. 
Section \ref{sec:shallow} provides an algorithm for generating a shallow quantum circuit for quantum hashing. The QFT method is discussed in Section \ref{sec:qft}.
Graph theory tools are presented in Section \ref{sec:tools}. 
The final Section \ref{sec:conclude} concludes the paper and contains some open questions.

\section{Preliminaries}\label{sec:prelims}
\textbf{Graph Theory.}
Let us consider an undirected unweighted graph $G=(V,E)$, where $V$ is the set of vertices and $E$ is the set of undirected edges. Let $n=|V|$ be the number of vertices, and $m=|E|$ be the number of edges. 

A non-simple path $P$ is a sequence of vertices $(v_{i_1},\dots,v_{i_h})$ that are connected by edges, that are $(v_{i_j},v_{i_{j+1}})\in E$ for all $j\in\{1,\dots,h-1\}$. Note that a non-simple path can contain duplicates.
Let the length of the path be the number of edges in the path, $len(P)=h-1$.
A path $P=(v_{i_1},\dots,v_{i_h})$ is called simple if there are no duplicates among $v_{i_1},\dots,v_{i_h}$. 
The distance $dist(v,u)$ is the length of the shortest path between vertices $v$ and $u$. Typically, when we say just a ``path'', we mean a ``non-simple path''. 
Let $\textsc{Neighbors}(v)$ be a list of neighbours for a vertex $v$, i.e., $\textsc{Neighbors}(v)=(u_{i_1},\dots,u_{i_k})$ such that $(v,u_{i_j})\in E$, and $|\textsc{Neighbors}(v)|=k$ is the length of the list.

 {\em A cactus} (or a cactus tree or a cactus graph) is a connected graph in which any two simple cycles have at most one vertex in common. Equivalently, it is a connected graph in which every edge belongs to at most one simple cycle, or (for nontrivial cacti) in which every block (maximal subgraph without a cut-vertex) is an edge or a cycle. {\em A vertex cactus} (or a vertex cactus tree or a vertex cactus graph) is a connected graph in which any vertex belongs to at most one simple cycle.

Let us discuss the ``Shortest non-simple 1-covering path'' problem that is a modification of the well-known ``Shortest covering path'' problem (SCPP or SCP problem) \cite{cpr1994}. 
The ``Shortest non-simple 1-covering path'' problem (1-SNSCPP or 1-SNSCP problem) is defined as follows. Let $P=(v_{i_1},\dots,v_{i_k})$ be a \textbf{non-simple}  path. We say that the path covers all visiting vertices and vertices that are connected with visited vertices by one edge. Formally, the path $P$ covers a set of vertices $R_P$ such that any vertex $v$ from this set is either
    (i) $v$ belongs to $P$ (there is $j\in\{1,\dots,k\}$ such that $v=v_{i_j}$);
    (ii) $v$ is connected with a vertex from $P$ (there is $j\in\{1,\dots,k\}$ such that $(v,v_{i_j})\in E$).
Let $B_P=R_P\backslash\{v_{i_1},\dots,v_{i_k}\}$, i.e. they are vertices connected with visited vertices by one edge. 
If the path $P$ covers all vertices ($R_P=V$), then we call it a 1-covering path. The solution of the 1-SNSCP problem is the shortest 1-covering path.

Let us consider an undirected weighted graph $S=(V',E')$, where $V'$ is the set of vertices, and $E'$ is the set of undirected weighted edges. Let $w:V'\times V'\to \mathbb{R}$ be a weight function. We assume that $w(v,u)=\infty$ if $(v,u)\not\in E$.

For a path $P=(v_{i_1},\dots,v_{i_h})$, the weight $w(P)$ is the sum of edge weights in the path, i.e. $w(P)=\sum_j^{h-1}w(v_{i_j},v_{i_{j+1}})$.

\textbf{Quantum circuits.}
Quantum circuits consist of qubits and a sequence of gates applied to these qubits. A state of a qubit is a column-vector from ${\cal H}^2$ Hilbert space. It can be represented by $a_0|0\rangle+a_1|1\rangle$, where $a_0,a_1$ are complex numbers such that $|a_0|^2+|a_1|^2=1$, and $|0\rangle$ and $|1\rangle$ are unit vectors. Here we use the Dirac notation. A state of $n$ qubits is represented by a column-vector from ${\cal H}^{2^n}$ Hilbert space. It can be represented by $\sum_{i=0}^{2^n-1}a_i|i\rangle$, where $a_i$ is a complex number such that $\sum_{i=0}^{2^n-1}|a_i|^2=1$, and $|0\rangle,\dots |2^n-1\rangle$ are unit vectors. Graphically, on a circuit, qubits are presented as parallel lines. 

As basic gates, we consider the following ones:

  $H=\frac{1}{\sqrt{2}}\begin{pmatrix}
1 & 1 \\
1 & -1 
\end{pmatrix}$,     $X=\begin{pmatrix}
0 & 1 \\
1 & 0 
\end{pmatrix}$,
 $R_y(\xi)=\begin{pmatrix}
cos(\xi/2) & -sin(\xi/2) \\
sin(\xi/2) & cos(\xi/2) 
\end{pmatrix}$,

$R_z(\xi)=\begin{pmatrix}
e^{\frac{i\xi}{2}} & 0 \\
0 & e^{-\frac{i\xi}{2}} 
\end{pmatrix}$,
$CNOT=\begin{pmatrix}
1 & 0 & 0 & 0\\
0 & 1 & 0 & 0\\
0 & 0 & 0 & 1\\
0 & 0 & 1 & 0 
\end{pmatrix}$.  

Additionally, we consider four non-basic gates. The first three gates are

\noindent
$CR_y(\xi)=\begin{pmatrix}
1 & 0 & 0 & 0\\
0 & 1 & 0 & 0\\
0 & 0 & cos(\xi/2) & -sin(\xi/2) \\
0 & 0 & sin(\xi/2) & cos(\xi/2) 
\end{pmatrix}$,
$CR_z(\xi)=\begin{pmatrix}
1 & 0 & 0 & 0\\
0 & 1 & 0 & 0\\
0 & 0 & e^{\frac{i\xi}{2}}  & 0\\
0 & 0 & 0 & e^{-\frac{i\xi}{2}} 
\end{pmatrix}$, 

$SWAP=\begin{pmatrix}
1 & 0 & 0 & 0\\
0 & 0 & 1 & 0\\
0 & 1 & 0 & 0\\
0 & 0 & 0 & 1 
\end{pmatrix}$,  $R_k=\begin{pmatrix}
1 & 0 \\
0 & e^{\frac{i\pi}{2^{k-1}}} 
\end{pmatrix}$,
$CR_k=\begin{pmatrix}
1 & 0 & 0 & 0\\
0 & 1 & 0 & 0\\
0 & 0 & 1 & 0\\
0 & 0 & 0 & e^{\frac{i\pi}{2^{k-1}}} 
\end{pmatrix}$.

The fourth is a uniformly controlled rotation gate on $n-1$ control qubits and one target qubit. We denote it by $UCR^{n-1}_b$ gate. Here, $b$ can be $y$ or $z$. Let $|\psi\rangle$ be an $(n-1)$-qubit quantum control register, and $|\phi\rangle$ be one target qubit. The gate applies the $R_a(\xi_i)$ rotation gate to $|\phi\rangle$ if the control qubit register is $|i\rangle$ for $i\in\{0,\dots, 2^{n-1}-1\}$. If $a=y$, then the gate applies $R_y(\xi_i)$.  If $a=z$, then the gate applies $R_z(\xi_i)$.

%

The reader can find more information about quantum circuits in \cite{nc2010,aazksw2019part1,k2022lecturenotes}.

\section{Method for Constructing a Circuit for Quantum Hashing}\label{sec:hash}

In this section, we present a method that allows us to construct a circuit for the quantum fingerprinting or quantum hashing algorithm for a cactus $G=(V,E)$ that is a qubit connectivity graph for a device. 

A simple example of application of the quantum hashing algorithm is the quantum automaton for the unary $MOD_p=\{a^\ell: \ell$ mod $p=0$, $p$ is prime$\}$ language. The automata and more information about the quantum hashing (quantum fingerprinting) algorithm can be found in Appendix \ref{apx:hash}.

The $n$-qubit quantum circuit for the quantum hashing algorithm contains three main parts. We assume that we have qubits $q_0,\dots,q_{n-1}$.
The first part is to apply the Hadamard transformation $H$ to qubits with indexes $1,\dots, n-1$. The second part is to apply the main $U_a$ transformation of the quantum hashing algorithm. Often, the main transformation is applied several times (let it be $\ell$ times). In the case of quantum automaton for $MOD_p$, for the input word $a^\ell$, we apply the transformation $\ell$ times. The final part is to apply the Hadamard transformation $H$ to qubits with indexes $1,\dots, n-1$, and to measure all qubits.
In the next section, we discuss a shallow circuit for the $U_a$ transformation.

\subsection{Shallow Circuit for Quantum Hashing.}\label{sec:shallow}
Here we consider a shallow circuit \cite{kalis18,ziiatdinov2023gaps,zkk2025} for $n-1$ control qubits. The circuit can give up to exponential advantage in the number of CNOT gates comparing to the standard circuit based on the uniformly controlled rotation gate. Computational experiments shows that we can find the required parameters for hashing. At the same time, there is no proof the exponential advantage in theory \cite{zkk2025}. The implementation of the standard circuit and the shallow circuit for noisy simulators of real quantum devices show that the error due to the large number of CNOTS for the standard circuit is larger than the error due to required dependence of angels for the shallow circuit \cite{zkk2025}. So, the shallow circuit is more suitable for near-future quantum devices.
If we do not have restrictions for applying two-qubit gates (when $G$ is a complete graph or a ``star'' graph), then the shallow circuit for the main transformation of the quantum hashing algorithm is presented in Figure \ref{fig:qf}. In the case of the shallow circuit, we call the transformation $U_s$. Here we assume that we have qubits $q_1,\dots,q_{n}$ such that $q_1,\dots,q_{n-1}$ are control ones and $q_{n}$ is the target one.

\begin{figure}[!ht]
\begin{center}
\includegraphics[width=0.27\textwidth]{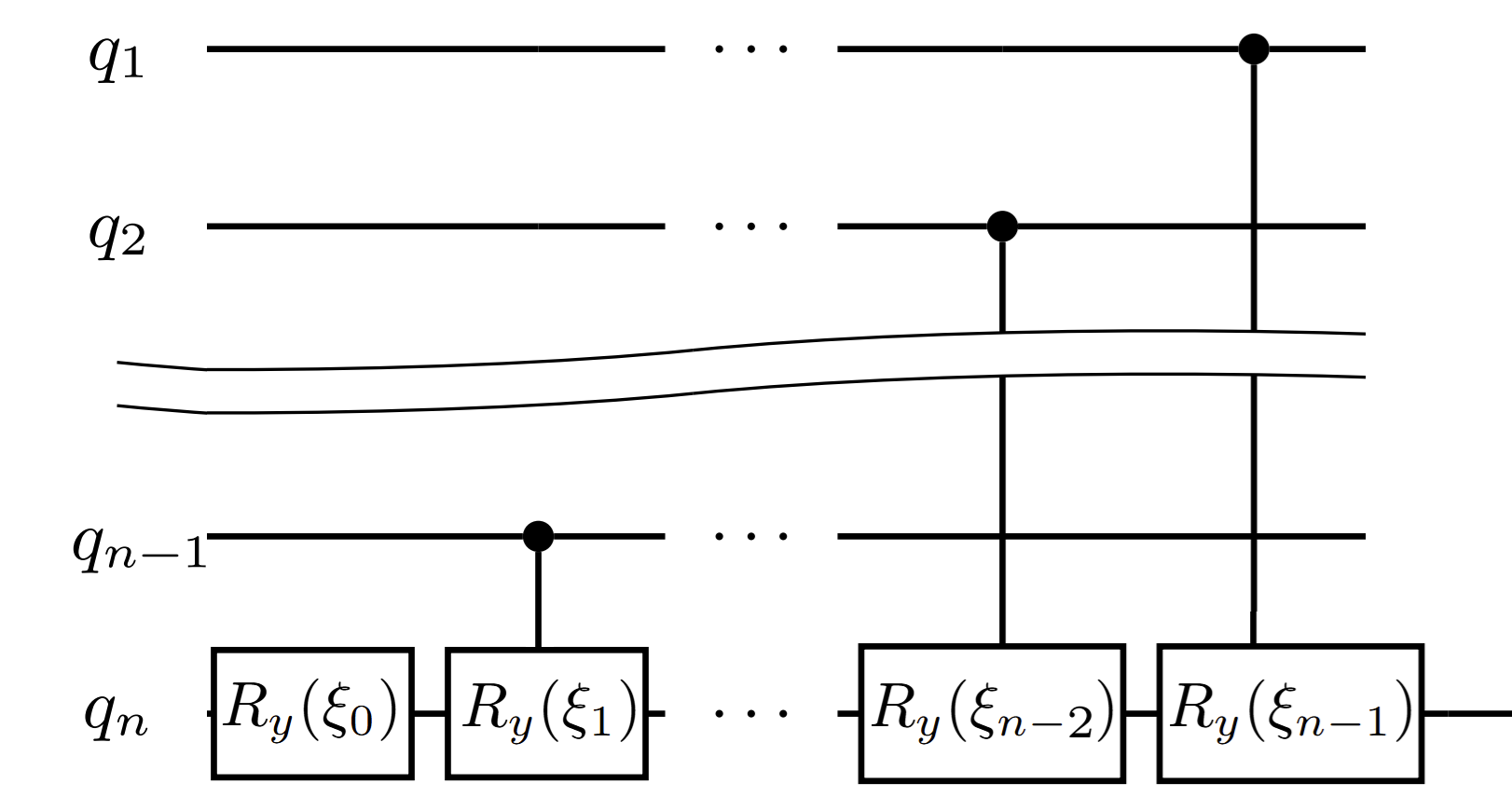}
\caption{\label{fig:qf} Shallow circuit for quantum fingerprinting or quantum hashing algorithm}    
\end{center}
\end{figure}

Let us present the main idea of the algorithm.
Let $P$ be the shortest non-simple 1-cover path for the cactus $G$ that is the solution of 1-SNSCPP. (See the definition of 1-SNSCPP in Section \ref{sec:prelims}). Initially, we associate the target qubit with the first vertex of the path $P$; and control qubits with other vertices of the graph. Because we can apply two-qubit gates only for adjacent vertices, the procedure moves the target qubit by the path $P$ from the first vertex of the $P$ to the last one. We move the target qubit using the SWAP gate. During the ``journey'' of the target qubit, we apply the control rotation operator to each neighbor vertex. Because the path $P$ covers all the vertices of the graph, it means that during the ``journey'', we can apply the control rotation operator to each edge of the graph (or each control qubit). So, this strategy allows us to implement the circuit.

In Section \ref{sec:tools} we present the algorithm for solving the 1-SNSCP problem in the case of a cactus. Here we assume that we have a procedure $\textsc{Shortest1CP}(G)$ that finds the shortest non-simple 1-cover path for the cactus $G$. Let us present a detailed description of the algorithm.

Let $q_i$ be the qubits of the original circuit or we call them ``original qubits''. Let the qubits of a physical device be associated with the vertices of the graph $G$, and we call them $v_i$.
The algorithm for constructing a circuit is the following.

    \textbf{Step 1.} We find the shortest 1-covering path in the cactus $G$ using the algorithm from Section \ref{sec:tools}. Assume that the path is $P=(v_{i_1}, \dots,v_{i_k})$.
    
    \textbf{Step 2.} The target qubit $q_{n}$ corresponds to the vertex $v_{i_1}$. The control qubits $q_1,\dots,q_{n-1}$ correspond to other vertices. Let us have a set $U$ of control qubits that are already used. Initially, it is empty $U\gets\emptyset$. Let $j$ be the index of the element in the path $P$ that corresponds to the target qubit. Initially, $j\gets 1$.
    
     In the next steps, the target qubit travels along the path $P$. 
    
    \textbf{Step 3.} We apply the control rotation with the control $v'$ and the target $v_{i_j}$ qubits, where $v'\in \textsc{Neighbors}(v_{i_j})\backslash\{v_{i_{j+1}}\}$ and $v'\not\in U$. In other words, $v'$ is a neighbor of $v_{i_j}$ but not $v_{i_{j+1}}$, and $v'$ is not visited. Then, we add $v'$ to the set $U$, i.e. $U\gets U\cup\{v'\}$.
   
    \textbf{Step 4.} If $v_{i_{j+1}}\not\in U$, then we apply a control rotation to the control $v_{i_{j+1}}$ and the target $v_{i_j}$ qubits. After that, we add $v_{i_{j+1}}$ to the set $U$, i.e. $U\gets U\cup\{v_{i_{j+1}}\}$. 
    
     \textbf{Step 5.} We apply the SWAP gate to $v_{i_{j}}$ and $v_{i_{j+1}}$. After that, we update $j\gets j+1$ because the value of the target qubit moves to $v_{i_{j+1}}$. If $j<k$, then we go to Step 3, and go to Step 6 otherwise.
     
     \textbf{Step 6.} We do this step if $j=k$. We apply control rotation with the control $v'$ and the target $v_{i_k}$ qubit, where $v'\in \textsc{Neighbors}(v_{i_k})$ and $v'\not\in U$. Then, we add $v'$ to the set $U$, i.e. $U\gets U\cup\{v'\}$. This is the final step of the algorithm.  

The implementation of the algorithm is presented in Appendix \ref{apx:algo-impl}. Let us discuss the CNOT cost of the generated circuit for $\ell$ applications of the main quantum hashing transformation $U_s$ (as an example, it is the main part of the circuit implementation of the automaton for $MOD_p$ processing $a^\ell$). For an odd application of $U_s$ we apply the presented algorithm. For an even application, we do the same algorithm, but move the target qubit in the reverse order from $v_k$ to $v_1$. The vertices $v'\in \textsc{Neighbors}(v_{i_k})$ are also considered in the reverse order. So, we have two sequential $\textsc{cR}(u,v_{i_k})$ gates, where  $\textsc{cR}(u,v_{i_k})$ represents the $CR_y(\xi)$ gate with $u$ as the control, and $v_{i_k}$ as the target qubits, and $\xi$ is the angle corresponding to $u$. Here $u$ is the last vertex in the list $\textsc{Neighbors}(v_{i_k})$. So, we can merge these two operators and use only one with the double angle. We have a similar situation with the first element from $\textsc{Neighbors}(v_{i_1})$.

The control rotation operator $\textsc{cR}(u,v)$ with the corresponding angle $\xi$ can be presented as a sequence of $\textsc{R}(v,\xi/2)$, $\textsc{cnot}(u,v)$, $\textsc{R}(v,-\xi/2)$, and $\textsc{cnot}(u,v)$, (Figure \ref{fig:cr}, the left circuit). Here $\textsc{R}(v,\xi/2)$ is the $R_y(\xi/2)$ gate applied to the qubit $v$; $\textsc{cnot}(u,v)$ is the gate $CNOT$ for $u$ as a control and $v$ as a target. The $\textsc{swap}(u,v)$ gate can be represented as $\textsc{cnot}(u,v)$, $\textsc{cnot}(v,u)$, and $\textsc{cnot}(u,v)$ (Figure \ref{fig:cr}, the middle circuit). 
Two sequential operators $CR_y$ and $SWAP$ that are $\textsc{cR}(u,v)$ and $\textsc{swap}(u,v)$ procedures can be represented by a circuit in Figure \ref{fig:cr} (the right circuit). Note that two sequential $\textsc{cnot}(u,v)$ gates are annihilated.

\begin{figure}[!ht]
\begin{center}
\includegraphics[width=0.35\textwidth]{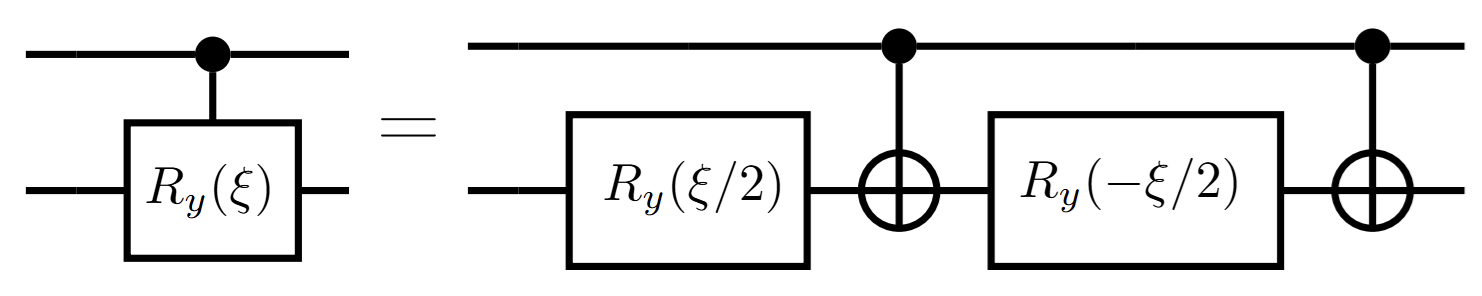}
$\quad$
\includegraphics[width=0.22\textwidth]{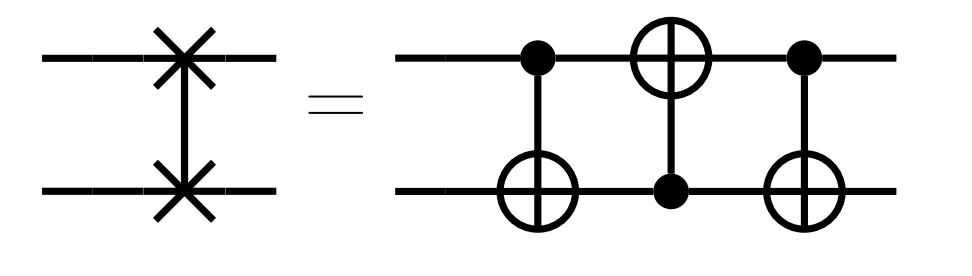}
$\quad$
\includegraphics[width=0.35\textwidth]{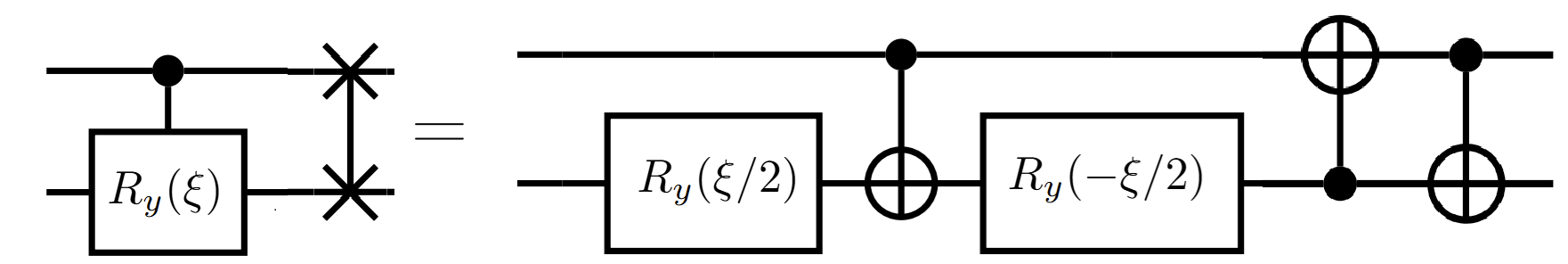}
\caption{\label{fig:cr} Representation of $CR_y$,  $SWAP$ gates, and a pair $CR_y$ and $SWAP$ gates using only basic gates.}
\end{center}
\end{figure}

Let us look at the CNOT cost of these operators, that is, the number of CNOT gates in the circuit.
We can say that CNOT cost of $\textsc{cR}(u,v)$ is $2$; of $\textsc{swap}(u,v)$ is $3$; of a pair $\textsc{cR}(u,v)$ and $\textsc{cnot}(u,v)$ is also $3$. 
Finally, we can discuss the CNOT cost of the constructed circuit for $\ell$ applications of the quantum hashing operator $U_s$.
\begin{theorem}\label{th:qh1}
    The CNOT cost of the circuit for $\ell$ applications of quantum hashing operator $U_s$ generated by the presented algorithm is $(3k + 2(n-k'))\ell-5\ell + 2$, where $k$ is the length of the 1-covering path $P=(v_{i_1},\dots,v_{i_k})$, and $k'$ is the number of distinct vertices in $P$, and $n$ is the number of vertices in the qubit connectivity graph. 
    (See Appendix \ref{apx:qh1})
\end{theorem}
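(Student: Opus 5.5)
The plan is to count CNOT gates directly. First I count the cost of a single application of $U_s$ as produced by Steps 1--6. Then I add the savings obtained by merging controlled rotations at the junction between consecutive applications.

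\textbf{Counting one application.} The target qubit is moved along $P=(v_{i_1},\dots,v_{i_k})$, so exactly $k-1$ $\textsc{swap}$ gates are applied, one per edge of $P$. Every control qubit (there are $n-1$ of them) receives exactly one $\textsc{cR}$ gate. This holds because $P$ is a 1-covering path, so every vertex is either on $P$ or adjacent to it, and the set $U$ prevents repetitions.

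I split the control qubits into two classes. The first class consists of the vertices of $P$ other than the starting vertex of the pass. There are $k'-1$ of them, and each one is first reached in Step 4, i.e.\ as $v_{i_{j+1}}$ immediately before $\textsc{swap}(v_{i_j},v_{i_{j+1}})$. For each of these, the pair $\textsc{cR}(v_{i_{j+1}},v_{i_j})$ followed by $\textsc{swap}$ on the same edge costs only $3$ CNOTs (Figure~\ref{fig:cr}, right), because two adjacent $\textsc{cnot}$ gates cancel. The second class consists of the remaining $n-k'$ control qubits, which lie in $B_P$. Each receives a standalone $\textsc{cR}$ in Step 3 or Step 6 at cost $2$.

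The remaining $(k-1)-(k'-1)=k-k'$ swaps are along edges whose endpoint was already in $U$. They are not paired with a $\textsc{cR}$ and cost $3$ each. Summing,
$$3(k'-1)+3(k-k')+2(n-k') = 3k+2(n-k')-3 .$$
For the reversed (even) applications the same argument applies with the roles of $v_{i_1}$ and $v_{i_k}$ exchanged. The distinct vertices of $P$ are still $k'$ in number, so the count is identical. Hence $\ell$ applications without merging cost $(3k+2(n-k'))\ell-3\ell$.

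\textbf{Merging at the junctions.} Between application $t$ and $t+1$, for $t=1,\dots,\ell-1$, the last $\textsc{cR}$ of one pass and the first $\textsc{cR}$ of the next pass act on the same pair $(u,v_{i_k})$ (or $(u,v_{i_1})$) with the same angle. This is because the neighbor list is traversed in reverse order. The two gates are therefore fused into one $\textsc{cR}$ with doubled angle, which saves $2$ CNOTs per junction, i.e.\ $2(\ell-1)$ in total. This gives
$$(3k+2(n-k'))\ell-3\ell-2\ell+2=(3k+2(n-k'))\ell-5\ell+2 .$$

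\textbf{Main obstacle.} The delicate point is justifying the merge. I have to show that at the endpoint vertex ($v_{i_k}$ for odd passes, $v_{i_1}$ for even passes) the final and initial operations are genuinely standalone $\textsc{cR}$ gates from Step 6 or Step 3, on the same control $u$. In particular they must not be absorbed into a $\textsc{cR}$--$\textsc{swap}$ pair, since that would forfeit or double-count the saving.

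I would argue this via minimality of $P$. If the endpoint $v_{i_k}$ had no neighbor outside $U$ at Step 6, then every vertex covered by $v_{i_k}$ is already covered by the prefix $(v_{i_1},\dots,v_{i_{k-1}})$, with $v_{i_k}$ itself covered as a neighbor of $v_{i_{k-1}}$. Deleting $v_{i_k}$ would then give a shorter 1-covering path, a contradiction. So Step 6 emits at least one $\textsc{cR}$. The symmetric argument at $v_{i_1}$ shows Step 3 of the reversed pass starts with a standalone $\textsc{cR}$ whose control is the same last neighbor $u$. The consistency of $U$ across forward and reverse passes is what I expect to require the most care to state cleanly.
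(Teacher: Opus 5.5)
Your counting is the paper's own: $k-1$ moves at $3$ CNOTs each, where a $\textsc{cR}$ paired with the following $\textsc{swap}$ is free; $n-k'$ standalone $\textsc{cR}$ gates at $2$ each; and one fused pair per junction, saving $2(\ell-1)$. Your minimality argument, that Step~6 (and symmetrically Step~3 at $v_{i_1}$) emits at least one standalone $\textsc{cR}$, is a justification the paper omits.

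\textbf{The gap.} The step ``each of the $k'-1$ non-initial vertices of $P$ is first reached in Step~4'' does not follow from the algorithm, and it can fail. Step~3 consumes every unused neighbor of $v_{i_j}$ other than $v_{i_{j+1}}$, and that includes vertices that occur later on $P$.

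Take the $4$-cycle $a,b,c,d$ with one pendant vertex attached to each cycle vertex, so $n=8$. Every 1-covering path must contain $a,b,c,d$. Hence every shortest one is a Hamiltonian path of the cycle, e.g.\ $P=(a,b,c,d)$ with $k=k'=4$, and its endpoints are adjacent. At $j=1$, Step~3 applies a standalone $\textsc{cR}(d,a)$, and later the move from $c$ to $d$ is a bare $\textsc{swap}$. One pass then costs $3\cdot 3+2\cdot 5=19$ CNOTs (standalone gates for $d$ and the four pendants), not $3k+2(n-k')-3=17$. Over $\ell$ passes this gives $17\ell+2$ instead of the claimed $15\ell+2$.

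The paper's proof makes the same silent assumption: it charges standalone $\textsc{cR}$ gates only to vertices off $P$. So the formula is exact only after modifying Step~3 to skip vertices that still occur later in $P$. With that change your claim becomes true. A control qubit moves only when the target steps onto its vertex, so at the first visit to $v_{i_{j+1}}$ that vertex still carries its own unused qubit, and Step~4 pairs it with the $\textsc{swap}$.

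\textbf{Two smaller points.} First, ``$U$ prevents repetitions'' needs $U$ initialised to $\{v_{i_1}\}$, or $U$ must track logical qubits rather than vertices. Otherwise, after the first $\textsc{swap}$, vertex $v_{i_1}$ holds the already-used qubit of $v_{i_2}$, and Step~3 at $j=2$ would apply $\textsc{cR}(v_{i_1},v_{i_2})$ a second time.

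Second, the obstacle you flag about $U$ across passes disappears if you define the even pass as the exact mirror of the odd pass: the same gate list in reverse order, with the same angles.
\begin{itemize}
\item Because the swaps are undone in reverse order, every mirrored $\textsc{cR}$ acts on the same logical control and target as in the forward pass, and the qubit layout is restored at the end.
\item A $\textsc{swap}$ followed by a $\textsc{cR}$ on the same edge also costs $3$ CNOTs: start the $\textsc{cR}$ decomposition with its CNOT so it cancels against the last CNOT of the $\textsc{swap}$.
\item The gates meeting at each junction are then literally identical, so your minimality argument is all that remains to check.
\end{itemize}
At $v_{i_1}$, under the modified Step~3, that argument should read: if every neighbor of $v_{i_1}$ lay on $P$, dropping the first vertex would leave a shorter 1-covering path.
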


Let us estimate the minimal and maximal possible CNOT cost.

\begin{corollary}\label{cr:path}
The CNOT cost of the circuit for $\ell$ applications of the quantum hashing operator $U_s$ generated by the presented algorithm is between $2n\ell-4\ell+2$ and $6n\ell-7\ell+2$, where $k$ is the length of the 1-covering path $P=(v_{i_1},\dots,v_{i_k})$, and $k'$ is the number of distinct vertices in $P$. 
(See Appendix \ref{apx:path})
\end{corollary}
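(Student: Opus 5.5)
The plan is to begin from the exact count in Theorem \ref{th:qh1}: the cost is $C(k,k')=(3k+2(n-k'))\ell-5\ell+2$. It then suffices to bound $3k-2k'$ from below and from above over all admissible shortest 1-covering paths. Two structural facts are needed. First, trivially $k'\le k$ (here $k$ counts vertices of $P$ with repetition, as in the theorem's notation $P=(v_{i_1},\dots,v_{i_k})$) and $k'\ge 1$. Second, $k'\le n-1$ whenever the path is a \emph{shortest} 1-covering path and $n\ge 2$. This holds because the vertices visited by $P$ induce a connected subgraph. If $P$ visited every vertex, then dropping an endpoint of a spanning tree of visited vertices (suitably) would still 1-cover $V$. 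More simply, the last vertex of $P$ can be removed: it is adjacent to the previous one, so it stays covered, while a strictly shorter path covers everything. The same argument shows that the last vertex of $P$ is a new vertex not seen before, since otherwise truncating would not lose coverage. So $k'\le n-1$.

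For the lower bound, write $3k-2k' = k + 2(k-k')\ge k \ge k'$. This gives $C\ge (k' + 2n - 2k')\ell-5\ell+2$, which is not yet obviously $\ge 2n\ell-4\ell+2$. I would instead use $3k-2k'\ge k'$ together with the fact that $k'$ appears with coefficient $-2$ only through $n-k'$. Rewriting, $C=(2n + (3k-2k'))\ell -5\ell+2 \ge (2n+k')\ell-5\ell+2\ge (2n+1)\ell-5\ell+2=2n\ell-4\ell+2$, using $k\ge k'\ge 1$. This bound is attained when $k=k'=1$, i.e.\ the star graph, which is a sanity check consistent with the stated minimum.

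For the upper bound, I need $3k-2k'\le 4n-2$. The key step is to bound the length of a shortest 1-covering path. Take a DFS walk on a spanning tree of $G$: it visits every vertex and traverses each of the $n-1$ tree edges at most twice. Such a walk starting at the root and stopping at the last new vertex has at most $2(n-1)$ edges, hence at most $2n-1$ vertex occurrences. Since $P$ is shortest, $k\le 2n-1$; sharper, one can restrict to a spanning tree of the non-leaf part, but this crude bound is what I would try first. Then $3k-2k'\le 3(2n-1)-2k'\le 6n-3-2=6n-5$ using $k'\ge1$, giving $C\le (6n-5)\ell-5\ell+2$. That is too weak by about $3\ell$, so the crude estimate must be refined. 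I would exploit that the walk need only visit the internal (non-leaf) vertices of the spanning tree, and that repeated occurrences each cost $3$ while distinct ones cost $1$. Writing $k=k'+r$ with $r$ repetitions gives $3k-2k'=k'+3r$. A tree walk over $k'$ distinct vertices has $r\le k'-1$ repetitions (one backtrack per tree edge except along the final branch), so $3k-2k'\le 4k'-3\le 4(n-1)-3$. Hence $C\le (2n+4n-7)\ell-5\ell+2$. The constants need to be matched carefully against $6n\ell-7\ell+2$, which means bounding $k'\le n-1$ and $r\le k'-2$ or similar via the "last vertex is new and first vertex may be chosen as a leaf" observation.

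The main obstacle is this last step: pinning down the exact extremal relation between the repetitions $r$ and the number of distinct vertices $k'$ in a shortest 1-covering path, so that the constants land precisely on $6n\ell-7\ell+2$. I would handle it by arguing that an optimal path can be taken as a DFS traversal of a tree on the visited set ending at a deepest branch. In such a traversal every tree edge not on the final root-to-end branch is traversed exactly twice and no edge more. I would then count the tree edges against $k'$ and $n$, checking tightness on a path graph, or a caterpillar, where repetitions are forced.
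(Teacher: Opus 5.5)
Your argument already proves the corollary; the ``main obstacle'' you describe does not exist. The line $C\le(2n+4n-7)\ell-5\ell+2$ simplifies to $C\le 6n\ell-12\ell+2$, which is at most $6n\ell-7\ell+2$ for every $\ell\ge 0$. The corollary asks only for an upper bound, so no constants have to be matched.

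In fact your estimate $3k-2k'\le 4k'-3\le 4n-7<4n-2$ shows that the stated maximum is never attained for $\ell\ge 1$. Looking for a path graph or caterpillar on which the constants ``land precisely'' would therefore be futile. Sharpening to $r\le k'-2$ is valid for $k'\ge 2$ but not needed.

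Two small repairs are worth making.
\begin{enumerate}
\item In the crude estimate you dropped the $2n$ term. It should read $C\le(8n-5)\ell-5\ell+2$, which is too weak by $(2n-3)\ell$. As you wrote it, $(6n-5)\ell-5\ell+2$ would already have beaten the target, not fallen short by $3\ell$.
\item Argue $r\le k'-1$ for $P$ itself rather than for ``a tree walk''. Coverage depends only on the visited set $W$, with $|W|=k'$, and $G[W]$ is connected. Hence a doubled-spanning-tree tour of $G[W]$ is a 1-covering path of length $2(k'-1)$. Minimality of $P$ then gives $k-1\le 2(k'-1)$. With this, you never need to replace $P$ by a DFS traversal.
\end{enumerate}

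Your route also differs from the paper's. The paper gets the minimum by exhibiting the star case $k=k'=1$. It gets the maximum by substituting $k=2n-2$ (from Lemma \ref{lm:len-wnsh}) and $k'=n-2$ into the formula of Theorem \ref{th:qh1}.

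For the minimum, your inequality $3k-2k'\ge k'\ge 1$ actually shows that $k=k'=1$ is the minimizer, which the paper only asserts.

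For the maximum, the cost increases in $k$ but decreases in $k'$. Inserting the largest possible $k'$ is therefore not by itself an upper bound. Using only the independent ranges $k\le 2n-2$ and $k'\ge 1$, one gets just $8n\ell-13\ell+2$. What closes the argument is the coupling $k\le 2k'-1$ forced by minimality, which you supply and the paper's substitution leaves implicit. Your route is thus the more rigorous of the two, and it gives the sharper bound $6n\ell-12\ell+2$.
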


\subsubsection{Comparing with Existing Results}\label{sec:compare-qh}
Let us compare our algorithm for a cactus  with existing technique for arbitrary graph \cite{kkcw2025}. Let us look to a simple cactus from Figure \ref{fig:cactus}.
\begin{figure}[!ht]
\begin{center}    \includegraphics[width=0.2\textwidth]{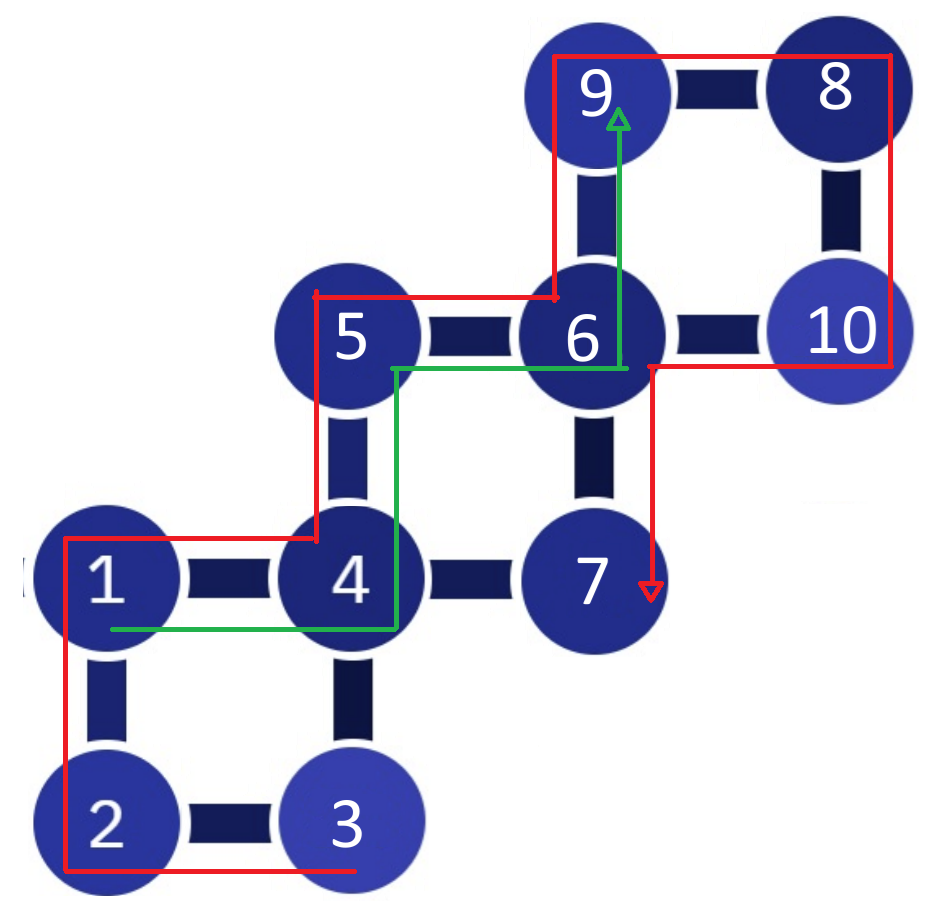}
\end{center}
\caption{\label{fig:cactus} A cactus as a qubits connectivity graph. The red path is the shortest path that visits all vertices at least once. It is used by \cite{kkcw2025}. The green path is the solution for the 1-SNSCP problem. It is used by our algorithm.}
\end{figure}

The algorithm of \cite{kkcw2025} uses the shortest path $P_1$ that visits all vertices at least once, which is presented as a red path on the figure. The length of the path is $k=10$ and the CNOT cost of one application of the quantum hashing algorithm is $3(k-2)+2\cdot 2=(10-2)\cdot 3+ 2\cdot 2=28$. 

Our algorithm uses a solution $P_2$ for the 1-SNSCP problem, which is presented as a green path on the figure. The length of the path is $4$ and there are $5$ vertices connected with vertices from the path. The CNOT cost of one application of the quantum hashing algorithm is $4\cdot 3+ 5\cdot 2=22$. 

If we consider a cactus that is a chain of cycles of size $4$ of a form similar to the Figure \ref{fig:cactus}, then the difference will be much more significant. If we have $t$ such cycles (for $t\geq 3$), then the algorithm of \cite{kkcw2025} uses the shortest path with length $k_1 = 4t-2$. The CNOT cost is $3(k_1-2)+2\cdot 2=3(4t-2-2)+4=12t-8$. The solution $P_2$ for the 1-SNSCP problem has $k_2=2t-2$ length. The CNOT cost is $3k_2+2(3t+1-k_2-1)=8t-2$.
\subsection{Circuit for Quantum Fourier Transform Algorithm}\label{sec:qft}
Let us consider a quantum device with some qubit connectivity graph $G=(V,E)$. We assume that $G$ is a connected graph. Here we present a method that allows us to construct a circuit that implements the Quantum Fourier Transform (QFT) algorithm on this device. More information on the QFT algorithm can be found in Appendix \ref{apx:qft}. If we do not have restrictions for applying two-qubit gates (when $G$ is a complete graph, for instance), then the circuit is presented in Figure \ref{fig:qft-cascade}.
We can split the circuit for the QFT algorithm into a series of control phase gates cascades depending on the target qubit for control phase operations. The $r$-th cascade uses $q_r$ as the target qubit (Figure \ref{fig:qft-cascade}).

\begin{figure}[!ht]
\includegraphics[width=0.45\textwidth]{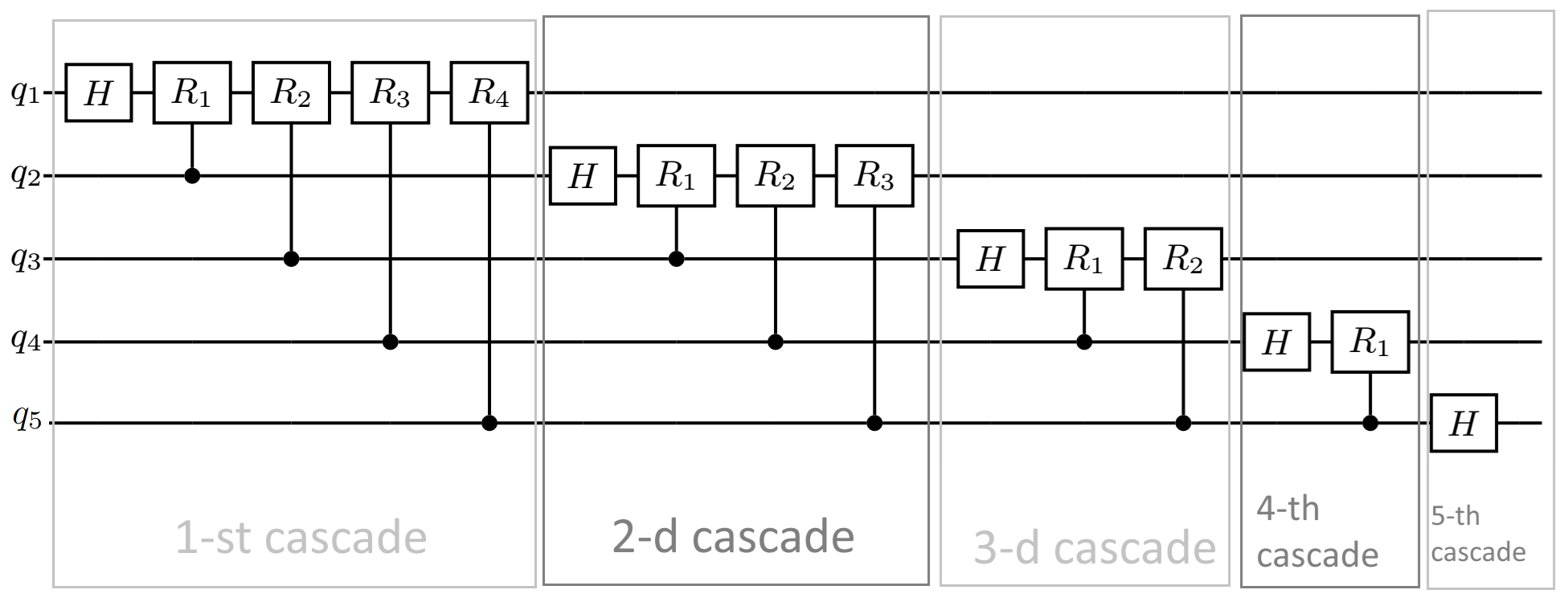}
\caption{\label{fig:qft-cascade} A quantum circuit for Quantum Fourier Transform algorithm for fully connected $5$ qubits splited to $5$ cascades depending on the target qubit.}
\end{figure}

We can see that each cascade has the structure same as a circuit for quantum hashing. That is why we can construct the circuit using similar method. There is an algorithm for constructing a circuit based on searching for a covering path \cite{kksk2026}. At the same time, the algorithm requires exponential time complexity. Based on our solution for a cactus from Section \ref{sec:tools}, we can construct a circuit in polynomial time. The algorithm from \cite{kksk2026} is described in Appendix \ref{apx:qft2}. The result is presented in the next theorem.

\begin{theorem}
    The algorithm of constructing a quantum circuit for QFT algorithm for cactus qubit connectivity graph has $O(n^3)$ time complexity. The constructed circuit has CNOT cost at most $2n^2$.
\end{theorem}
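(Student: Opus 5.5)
We split the argument into two parts: the running time of the construction, and the CNOT count of the resulting circuit. We take the algorithm of \cite{kksk2026} (Appendix~\ref{apx:qft2}) as a black box that receives a 1-covering path $P=(v_{i_1},\dots,v_{i_k})$ of $G$. It then realises the $n$ cascades one after another, each cascade handled in the manner of Section~\ref{sec:shallow}: the current target qubit travels along $P$ (or along $P$ reversed, alternating between cascades), SWAPs are applied along the path, and a controlled phase $CR_k$ is applied to every not-yet-used neighbour. The only step in \cite{kksk2026} that takes exponential time is the search for the covering path. We replace it by the procedure $\textsc{Shortest1CP}(G)$ from Section~\ref{sec:tools}, which for a cactus returns an optimal 1-covering path in $O(n^3)$ time.

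For the running time, we first observe that a cactus has $m\le \frac{3}{2}(n-1)=O(n)$ edges, because every block is an edge or a cycle. We then show that the rest of the construction costs $O(n^3)$ at most. There are $n$ cascades. In each, the path has length $k=O(n)$, since a shortest 1-covering path never needs more than $2n$ steps: a DFS walk of a spanning tree already covers everything. At each position of the path we scan the neighbour list of the current vertex and check membership in the used set $U$. This makes one cascade cost $O(k+m)=O(n)$, and all cascades together cost $O(n^2)$. Adding the $O(n^3)$ for $\textsc{Shortest1CP}$ gives the total $O(n^3)$.

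For the CNOT bound we count per cascade, as in Theorem~\ref{th:qh1}. The $r$-th cascade acts on $n-r+1$ logical qubits: one target and $n-r$ controls. Each controlled phase $CR_k$ costs $2$ CNOTs, the same as $CR_y$. A SWAP costs $3$ CNOTs, but when it is merged with the preceding controlled rotation on the same edge the pair costs $3$ CNOTs instead of $5$. Applying the accounting of Theorem~\ref{th:qh1} with $\ell=1$ to a cascade with $n'$ relevant qubits gives at most $3k+2(n'-k')-3$ CNOTs. We then bound $k$ and $k'$ so that the result stays at most $2n'$ plus lower-order terms, and sum over the $n$ cascades to reach $2n^2$.

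\textbf{The main obstacle.} Corollary~\ref{cr:path} allows a single pass to cost up to about $6n$, so the naive per-cascade bound gives about $6n^2$, not $2n^2$. To reach $2n^2$ we need the structural property of \cite{kksk2026}: after a cascade is finished, its target qubit is no longer used in later cascades. Since the target has moved to the end of the path, the next cascade can start from the far end and traverse $P$ in reverse. Moreover, once a qubit has been used as a target it is a dead vertex, and SWAPs through it count toward the moves of the next target. We will have to verify that, amortised over all cascades, each of the at most $n^2/2$ ordered pairs (target, control) contributes at most about $4$ CNOTs. That is, the $2$ CNOTs of the rotation plus a share of the SWAP costs must fit inside the budget of $4$ per pair, because the SWAPs merge with rotations at cost $3$ per edge. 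This amortised charging argument is the step we expect to be delicate. If it fails to give the sharp constant, we would fall back to bounding each cascade $r$ by $4(n-r)+O(1)$ and summing, $\sum_{r=1}^{n}4(n-r)\le 2n^2$.
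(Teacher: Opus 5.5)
\textbf{There is a genuine gap, at exactly the point you flag, and on your route it cannot be closed.}

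You replace the construction of \cite{kksk2026} by a single fixed path $P$. The target traverses it, forwards or backwards, in every cascade. In the paper's cost model each move of the target costs $3$ CNOTs, whether it is a bare SWAP or a merged $CR$+SWAP pair. So, as you describe it, each of the $n$ cascades pays at least $3\,len(P)$ CNOTs, however few controls remain. Already for the LNN graph, whose shortest 1-covering path has length $n-3$, this totals about $3n^2$ CNOTs. Hence neither the amortised charge of about $4$ CNOTs per (target, control) pair nor the fallback $4(n-r)+O(1)$ per cascade can hold.

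The variant is also not shown to implement QFT. After a cascade the finished target occupies $v_{i_k}$, which is exactly where your reversed traversal for the next cascade should start. Vertices covered only from $v_{i_k}$ can then be reached only by swapping through the dead qubit. You also give no placement of logical qubits that puts the next target at the start of its traversal. The algorithm of Appendix~\ref{apx:qft2} avoids all of this. It recomputes a covering path $P_r$ of the shrinking graph for every cascade. It parks the finished target on an unvisited neighbour of the last path vertex and excludes that vertex. It fixes the placement $S$ in advance with \textsc{ConstructS}.

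The paper's proof does no CNOT counting at all. It takes the running time from Theorem~\ref{th:wnshpath} and the cost bound from \cite{kksk2026}. There, Theorem~\ref{th:qft} gives $K+n^2-n-1$ with $K=\sum_r len(P_r)$, and Corollary~\ref{cr:bounds} turns this into at most $2n^2-2n-2\le 2n^2$. Invoking that result for the actual per-cascade algorithm is the missing step. If you redo the proof along those lines, two parts of your plan change.

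First, the running time now involves $n-2$ calls to \textsc{Shortest1CP} on the shrinking graph, not one. Your $O(n^3)+O(n^2)$ accounting therefore no longer applies. The paper's one-line attribution to Theorem~\ref{th:wnshpath} does not address this either, and a naive count gives $O(n^4)$.

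Second, the cited bound charges $2$ CNOTs per controlled phase plus only $1$ per path edge, which treats every SWAP as merged with a controlled phase. Your own observation is that a bare SWAP costs $3$, so one pass along a long non-simple path can cost close to $6n'$ (cf.\ Corollary~\ref{cr:path}). That is precisely what a self-contained proof would have to reconcile with the cited charge.
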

    The time complexity comes from Theorem \ref{th:wnshpath}. The CNOT cost is cost of the circuit build by algorithm from \cite{kksk2026}. 

\section{The Shortest Non-simple 1-covering Path Problem in a Cactus}\label{sec:tools}
Let us consider a cactus $G=(V,E)$ such that $n=|V|$ is a number of vertices and $m=|E|$ is a number of edges.
In this section, we discuss the ``Shortest non-simple 1-covering path'' problem that is a modification of the well-known ``Shortest covering path'' problem (SCPP or SCP problem) \cite{cpr1994}. The descriptions of the 1-SNSCP problem is presented in Section \ref{sec:prelims}.
As the SCP problem, the 1-SNSCP problem has a strong connection with the Hamiltonian path problem and the Travelling salesman problem \cite{cormen2001}. Any connected graph has a 1-covering path.
The decision version of the SCP problem is NP-complete \cite{cpr1994}. The Travelling Salesman Problem (TSP) is NP-hard. Similarly, by polynomial reduction of TSP to 1-SNSCPP, we can show that it is NP-hard.  

At the same time, in the case of a cactus, we suggest a polynomial-time algorithm. The solution has 
$O(n^3)$ 
time complexity (Theorem \ref{th:wnshpath}).  The results of this section can be interesting itself from the graph-theory point of view.
%
%
Let us estimate the maximum possible length of a 1-covering path.

\begin{lemma}\label{lm:len-wnsh}
The length of a 1-covering path in a connected graph $G$ of $n$ vertices is at most $2n-3$.  (See Appendix \ref{apx:len-wnsh}).
\end{lemma}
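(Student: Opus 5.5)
The natural approach is to exhibit one explicit 1-covering path of length at most $2n-3$. The shortest 1-covering path is then no longer, which is the quantity the lemma is really about. The construction is a depth-first traversal of a spanning tree that is truncated so that leaves need not be visited, since a visited neighbour already covers them. The degenerate cases are handled separately. If $n=1$, the path $(v)$ has length $0$. If $n=2$, the single vertex $(v)$ already covers its neighbour, so the length is $0\le 1$. From now on assume $n\ge 3$.

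Fix a spanning tree $T$ of $G$, which exists because $G$ is connected. Delete all leaves of $T$ to obtain the tree $T'$ on $n'$ vertices. Since $n\ge 3$, $T$ has at least one internal vertex, so $n'\ge 1$; since $T$ has at least two leaves, $n'\le n-2$. Every vertex of $T$ lies in $T'$ or is adjacent in $T$, and hence in $G$, to a vertex of $T'$. Consequently, any non-simple path that visits every vertex of $T'$ is a 1-covering path of $G$.

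Now take the closed depth-first (Euler) tour of $T'$ starting at any vertex. It traverses each of the $n'-1$ edges of $T'$ exactly twice, so its length is $2(n'-1)$. Every consecutive pair in the tour is an edge of $T'\subseteq G$, so it is a valid non-simple path in $G$. Its length satisfies $2(n'-1)\le 2(n-3)=2n-6\le 2n-3$. The tour can be shortened further by stopping at the last newly visited vertex, which saves the return path, but this is not needed for the bound.

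The only step needing care is the count of leaves, since it must cover all trees with $n\ge 3$. A path graph has exactly two leaves, so the bound $n'\le n-2$ is sharp for it. If $n\le 2$, there may be no internal vertex to remove leaves from, which is why those cases were treated separately above. With these checked, the stated bound $2n-3$ follows with room to spare. It is the bound one gets from a plain closed DFS walk of all of $T$, which has length $2(n-1)$, when one drops the first and last edges. That weaker argument would also suffice and can serve as a fallback.
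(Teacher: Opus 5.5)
Your argument is correct and is essentially the paper's own proof: take a spanning tree and walk an Euler tour of it that skips the leaves. Your explicit count $2(n'-1)\le 2(n-3)$ for the leaf-deleted tree $T'$ is cleaner and sharper than the paper's leaf bookkeeping. The one caveat is $n=1$: there even the trivial path has length $0>2n-3=-1$, so your ``handled separately'' case does not prove the bound; rather, it shows that the lemma (in the paper as well) only holds for $n\ge 2$.
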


Let us present the algorithm for the 1-SNSCP problem for a cactus $G$.

Before starting the algorithm, we exclude all leafs because the path will note visit leafs. 
Firstly, we construct a weighted vertex cactus $T=(V',E')$ by $G$. Let $V_1\subset V$ be a set of vertices that belongs to at most one cycle, and $V_2=V\backslash V_1$ be a set of vertices that belongs to at least two cycles. Let $cyc(v)$ be the number of cycles whose $v$ belongs, where $v\in V_2$. Each vertex $v\in V_1$ corresponds to $v'\in V$. Each vertex $v\in V_2$ corresponds to $cyc(v)$ vertices $v'_1,\dots v'_{cyc(v)}\in V'$. There are no other vertices in $V'$ except the vertices listed above.

Each edge $(v,u)\in E$ such that $v,u\in V_1$ corresponds to an edge $(v',u')\in E'$, and weight $w(v',u')=1$.
Each edge $(v,u)\in E$ such that $u\in V_1$ and $v\in V_2$ corresponds to an edge $(v'_i,u')\in E'$, and weight $w(v',u')=1$. Here we assume that we can enumerate all cycles to which $v$ belongs and $v'_i$ belongs to $i$-th cycle. We can detect the index of the cycle by the vertex $u$. 
Each edge $(v,u)\in E$ such that $u,v\in V_2$ corresponds to an edge $(v'_i,u'_j)\in E'$, and weight $w(v',u')=1$. Were $v'_i$ belongs to $i$-th cycle in enumeration of $v$, and $u'_j$ belongs to the $j$-th cycle in enumeration of $u$ if these are the same cycles. 
We also add edges $(v'_1,v'_j)$ to $E'$ with weight $w(v'_1,v'_j)=0$ for each vertex $v\in V_2$ and the corresponding vertices $v'_1,\dots v'_{cyc(v)}\in V'$. There are no other edges in $E'$ except the edges listed above.

By any path $P'$ in $T$ we can form a path $P$ in $G$ by taking corresponding vertices. If $v'_i$ in the path $P'$, then we replace it with corresponding $v$. If there are two sequential vertices $v'_{i_1}$ and $v'_{i_2}$, then we replace it by one vertex $v$.

Let us show three properties of the new weighted vertex cactus $T$. We list them in the next lemmas.

\begin{lemma}
Let $P'$ be the $1$-covering path in $T$ with minimal weight. Then the path $P$ in $G$ formed by $P'$ is the shortest $1$-covering path and solution of the 1-SNSCP problem in $G$.
\end{lemma}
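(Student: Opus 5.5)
The plan is to prove the lemma by a two-way correspondence between 1-covering paths in $T$ and in $G$ that preserves length. Write $\pi$ for the projection sending a path $P'$ in $T$ to its formed path $P$ in $G$. It suffices to show two things:
(a) if $P'$ is 1-covering in $T$, then $\pi(P')$ is 1-covering in $G$ and $len(\pi(P'))=w(P')$;
(b) every 1-covering path $Q$ in $G$ lifts to some 1-covering path $Q'$ in $T$ with $\pi(Q')=Q$ and $w(Q')=len(Q)$.
Given these, if $P'$ has minimal weight among 1-covering paths in $T$ and $Q$ is any 1-covering path in $G$, then
\[
len(\pi(P'))=w(P')\le w(Q')=len(Q).
\]
So $\pi(P')$ is a shortest 1-covering path, i.e.\ a solution of the 1-SNSCP problem. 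The leaves removed beforehand are handled the same way on both sides: a leaf is covered exactly when its unique neighbour is visited, and this condition is also preserved by $\pi$.

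For (a), I would first check that $\pi$ maps edges to edges. A weight-1 edge $(v'_i,u'_j)$ comes from the edge $(v,u)\in E$, so it becomes a step of length $1$ in $G$. A weight-0 edge $(v'_1,v'_j)$ joins two copies of the same vertex; the merging rule collapses it to the single vertex $v$, so it contributes $0$ to the length. Hence $\pi(P')$ is a genuine path with $len(\pi(P'))=w(P')$. Next, coverage. A vertex $x$ of $G$ is covered by $\pi(P')$ if some copy of $x$ lies on $P'$, or some copy of $x$ is adjacent in $T$ to a copy of a visited vertex $y$. In the second case $(x,y)\in E$, because every non-zero edge of $T$ projects to an edge of $G$; zero edges only join copies of the same vertex. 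Since $P'$ covers every copy $x'_i$, it in particular covers $x$.

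For (b), I would build the lift step by step. Each step $(v,u)$ of $Q$ lies in a unique block of the cactus, either a bridge or one cycle. Replace it by the copies $v'_i,u'_j$ indexed by that block. Whenever consecutive steps at a vertex $v\in V_2$ use different cycles $i\ne j$, insert the zero-weight detour $v'_i,v'_1,v'_j$, dropping the redundant $v'_1$ if $i$ or $j$ equals $1$. This leaves the weight unchanged, and the collapsing rule of $\pi$ recovers exactly $Q$.

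The main obstacle is coverage in (b). When $Q$ passes through $v\in V_2$ within a single cycle, the lift visits only one copy $v'_i$. That copy has no $T$-edges to the neighbours of $v$ on the other cycles, nor to the other copies $v'_j$ (except via $v'_1$), so those copies and neighbours may be left uncovered in $T$. My first attempt would be to modify the lift so that it passes through the hub $v'_1$ at every visit to such a $v$, at zero cost. From the hub all copies $v'_j$ are covered. The remaining difficulty is the neighbours of $v$ in other cycles, which are adjacent only to $v'_j$ and not to $v'_1$. To handle them, I would insert the zero-weight excursion $v'_1,v'_j,v'_1$ for each relevant cycle $j$. This still collapses to the single vertex $v$ under $\pi$ and adds nothing to the weight. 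After these insertions every neighbour in $G$ of a visited vertex is adjacent in $T$ to a visited copy, so the lift is 1-covering in $T$ with $w(Q')=len(Q)$, which completes (b).
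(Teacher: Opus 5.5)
The paper states this lemma without any proof, so there is no argument of the authors to compare with. Your two-way correspondence is the natural proof: projection preserves coverage and length, and every 1-covering path of $G$ lifts with equal weight. Part (a) and the weight bookkeeping in (b) are correct.

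\textbf{The gap in (b).} The final coverage claim in (b) does not follow as written. Being 1-covering in $T$ means every vertex of $T$ is covered, including every copy $x'_k$ of every $x\in V_2$. Your last sentence only shows the following: for a neighbour $x$ of a visited vertex $y$, the single copy of $x$ incident to the $T$-edge coming from $(x,y)$ is covered.

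Suppose some $x\in V_2$ were covered by $Q$ only through adjacency, without being visited. Each other copy $x'_{k'}$ is adjacent only to the hub $x'_1$ (or to the other copies, if $k'=1$) and to the neighbours of $x$ on cycle $k'$. None of these need be visited. Your hub excursions are inserted only at vertices that $Q$ actually visits, so those copies would stay uncovered.

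\textbf{The fix.} This case never occurs, and you need to say why.
\begin{itemize}
\item Every $x\in V_2$ is a cut vertex of the cactus. Two cycles through $x$ meet only in $x$. A path between their remaining parts that avoids $x$ would close a simple cycle sharing an edge with each of them, contradicting that every edge lies on at most one simple cycle.
\item Every 1-covering path visits every cut vertex $x$. A walk avoiding $x$ stays inside one component of $G-x$, so no vertex of another component is visited or adjacent to a visited vertex.
\end{itemize}
Hence $Q$ visits all of $V_2$. Your excursions then put every copy of every $V_2$-vertex on $Q'$, and coverage of all of $T$ follows.

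\textbf{Two smaller points to make explicit.} First, the construction does not say to which copy of $v\in V_2$ a bridge $(v,u)$ is attached. Your argument works for any fixed choice, since all copies of $v$ get visited, but the choice should be fixed.

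Second, your sentence about leaves silently adds a constraint to the problem in $T$. Suppose $T$ is built from the leaf-stripped graph with the paper's plain definition of covering. Then the conclusion fails for the original graph. For example, take the path $a$--$b$--$c$--$d$--$e$: $(c)$ is a weight-$0$ 1-covering path of $T=b$--$c$--$d$, but it does not cover $a$ and $e$. So either keep the leaves in $T$ (they lie in $V_1$, and your correspondence handles them automatically), or require explicitly that the neighbour of each removed leaf be visited.
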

%
\begin{lemma}\label{lm:cactus-vertex}
For a cactus $G(V,E)$ and constructed $T(V',E')$, the following statements hold: $|V'|\leq 2|V|$, $|E'|\leq |E|+3|V|$ (See Appendix \ref{apx:cactus-vertex})
\end{lemma}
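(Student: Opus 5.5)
The plan is to count vertices and edges of $T$ directly from the construction, using two standard structural facts about cacti: the cycle count is $c=m-n+1$, and there is a vertex--cycle incidence structure.

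\textbf{Vertices.} By construction,
\[
|V'| = |V_1| + \sum_{v\in V_2} cyc(v).
\]
So it suffices to show $\sum_{v\in V_2} cyc(v)\le |V_1|+2|V_2|$, or more crudely $\sum_{v\in V} cyc(v)\le 2|V|$. Consider the bipartite block--cut structure restricted to cycles: vertices of $G$ on one side, cycles of $G$ on the other, with an incidence whenever a vertex lies on a cycle. In a cactus this incidence graph is a forest. Indeed, a cycle in it would give two cycles of $G$ sharing an edge, or a larger cycle passing through several blocks, contradicting the block structure.

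Hence the number of incidences is at most (number of vertices of $G$) $+$ (number of cycles) $-1$. Every cycle has at least $3$ vertices, so the number of incidences is also at least $3c$. Combining, $3c \le n + c - 1$, which gives $c\le (n-1)/2$. Therefore
\[
\sum_{v} cyc(v) \le n + c - 1 < \tfrac{3}{2}\,n .
\]
Vertices lying on no cycle contribute one copy each, so
\[
|V'| \le \sum_v \max(1, cyc(v)) \le n + c \le 2n .
\]

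\textbf{Edges.} Each edge of $E$ produces exactly one edge of $E'$; the endpoint copy is determined by the unique cycle containing that edge, if any. The only additional edges are the zero-weight edges $(v'_1,v'_j)$, $j=2,\dots,cyc(v)$, giving $cyc(v)-1$ extra edges per $v\in V_2$. Thus
\[
|E'| \le |E| + \sum_{v\in V_2}\bigl(cyc(v)-1\bigr) \le |E| + c \le |E| + n ,
\]
which is certainly at most $|E|+3|V|$.

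\textbf{Main obstacle.} The step needing care is justifying that the vertex--cycle incidence graph is acyclic, so the incidence count is bounded by $n+c-1$. I would argue via blocks: every cycle of a cactus is a block, and the block--cutvertex graph of any connected graph is a tree. The vertex--cycle incidence graph is a subgraph of it, once the non-cut vertices of each cycle are attached as leaves to their cycle-block.

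A minor point concerns edges between two vertices on a common cycle, possibly both in $V_2$. It is worth checking that the "same cycle" rule picks a unique copy at each end, so no edge is duplicated.
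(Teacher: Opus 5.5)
Your proof is correct, and it reaches the bounds by a different and sharper route than the paper.

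\textbf{The paper's route.} The paper works with a graph $B$ whose nodes are the blocks (cycles and cycle-free vertices). It asserts that $B$ is a tree on $h\le|V|$ nodes and bounds the number of extra copies $v'_j$ by the number of edges of $B$. It then charges each extra copy with three new edges: its zero-weight edge and its two cycle edges.

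\textbf{Your route.} You use the bipartite vertex--cycle incidence forest, which gives $c\le(n-1)/2$ and $|V'|\le n+c<3n/2$. You also observe that $E$ injects into $E'$, so the only new edges are the $cyc(v)-1$ zero-weight edges at each $v\in V_2$. This gives $|E'|\le|E|+c$.

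\textbf{What each buys.} Besides better constants, your version avoids two weak points in the paper's count. First, the paper's block graph $B$ (blocks adjacent when they share a vertex) is not a tree once three cycles pass through one vertex, since they form a triangle in $B$. Your bipartite incidence graph really is a forest. Second, ``three edges per copy'' overcounts, because the cycle edges at $v'_j$ are images of edges already counted in $|E|$. The paper's version is shorter, and its weaker constants are all that Theorem~\ref{th:wnshpath} needs.

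\textbf{A small repair.} The inequalities $\sum_v\max(1,cyc(v))\le n+c$ and $\sum_{v\in V_2}(cyc(v)-1)\le c$ do not follow directly from $\sum_v cyc(v)\le n+c-1$. The reason is that cycle-free vertices are counted in $n$ but contribute nothing to the incidence sum. Apply the forest bound only to the incidence graph on the vertices lying on some cycle together with the cycles. For $c\ge 1$ this gives $\sum_{cyc(v)\ge1}cyc(v)\le|\{v:cyc(v)\ge1\}|+c-1$, from which both inequalities follow. For the edge bound alone, the crude estimate $\sum_{v\in V_2}(cyc(v)-1)\le\sum_v cyc(v)<3n/2$ already suffices.
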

\begin{lemma}\label{lm:constr-T} The time complexity of constructing $T$ is $O(n+m)$. (See Appendix \ref{apx:constr-T}). 
\end{lemma}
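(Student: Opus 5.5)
The plan is to show that $T$ can be built by a constant number of linear passes over $G$. All of them rest on one preprocessing step: compute the biconnected components (blocks) of the cactus $G$. I would use the standard DFS-based algorithm of Hopcroft--Tarjan with an edge stack. It runs in $O(n+m)$ time and labels every edge with the block containing it. For a cactus, a block is either a single edge (a bridge) or a simple cycle, and we can tell which in $O(1)$ per block by comparing its number of edges with its number of vertices, or simply by checking whether it has more than one edge. Removing leaves first (the paper's preprocessing) is also linear: one pass over degrees, iterated with a queue.

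Next, for each vertex $v$ we compute $cyc(v)$ by scanning its incident edges and counting the distinct cycle-blocks among their labels. Since a vertex lies on each of its cycles through exactly two incident edges, this is a sum of degrees, hence $O(m)$. This pass gives the partition into $V_1$ and $V_2$. For each $v\in V_2$ we enumerate its cycles $1,\dots,cyc(v)$ and store a map from cycle-block id to local index. We keep it as a small list per vertex, or in a global array indexed by block id that is reset after processing $v$, so lookups are $O(1)$ amortized. Then we create the vertices of $V'$: one copy per $v\in V_1$, and $cyc(v)$ copies per $v\in V_2$. By Lemma~\ref{lm:cactus-vertex}, or directly from $\sum_v cyc(v)\le 2m$, the total is $O(n+m)$.

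Then we build $E'$. Each edge $(v,u)\in E$ carries its block label $b$. If $v\in V_2$, we map the endpoint $v$ to $v'_i$, where $i$ is the local index of $b$ at $v$, and we map $u$ the same way. A bridge incident to a $V_2$ vertex needs a convention. Note that a bridge touches at most one of its endpoint's cycle-copies, so we attach it to copy $v'_1$, which is consistent with the paper's rule that "we can detect the index by $u$". Each such edge costs $O(1)$ and receives weight $1$. Finally, for each $v\in V_2$ we add the $cyc(v)-1$ weight-$0$ edges $(v'_1,v'_j)$, which is $O(\sum cyc(v))=O(m)$ in total. Summing all the phases gives $O(n+m)$, and Lemma~\ref{lm:cactus-vertex} confirms that the output size itself is $O(n+m)$, so simply writing $T$ down does not exceed the bound.

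The main obstacle is making the per-edge lookup of "which copy of $v$ corresponds to this cycle" run in $O(1)$ without hashing. I would handle it by the local-index map described above: during the scan of $v$'s incident edges, assign consecutive indices to new block ids using a global array $idx[b]$, and clear those entries afterwards. This keeps each lookup constant time and the total work linear.
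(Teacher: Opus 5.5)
Your proof is correct and takes the same route as the paper's: one DFS pass to identify the cycles (blocks) of $G$, a linear pass to split $V$ into $V_1$ and $V_2$, and a linear-time construction of $V'$ and $E'$, whose size is $O(n+m)$ by Lemma~\ref{lm:cactus-vertex}. You add details the paper leaves implicit, namely the per-vertex cycle-index array for $O(1)$ copy lookups and a convention for bridges at $V_2$ vertices, which the paper's edge rules do not cover. (One small remark: the paper removes leaves once rather than by iterated peeling, but either is linear and irrelevant to this bound.)
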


Secondly, we develop an algorithm for a weighted vertex cactus.
Let a block of $T$ be a cycle or a vertex that does not belong to any cycle.
Let us construct a tree $A$ by a vertex cactus $T$ where each vertex of $A$ corresponds to a block of $T$. Two vertex in $A$ are connected if corresponding blocks in $T$ are connected. The weight of the edge in $A$ is the weight of the edge between corresponding blocks in $T$. See Figure \ref{fig:cactus01} for example.

\begin{figure}[!ht]
\begin{center}
\includegraphics[width=0.6\textwidth]{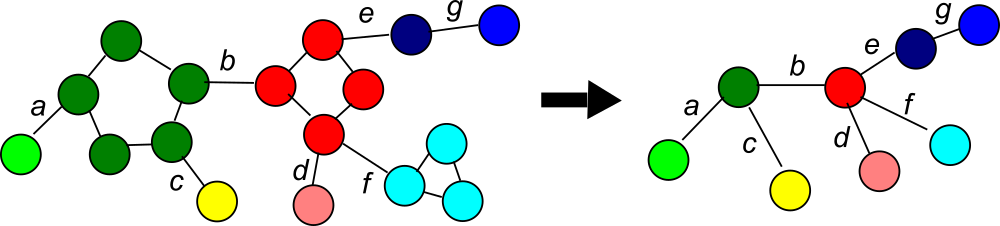}
\caption{\label{fig:cactus01} Construction of the tree A by the vertex cactus $T$.}    
\end{center}
\end{figure}

Let us fix a vertex $r$ as a root of the tree $A$. Assume that $r$ corresponds to a single vertex in $T$. We use the dynamic programming approach \cite{cormen2001} for developing an algorithm. Let us consider a vertex $v'$ in $T$ and the vertex $u$ of $A$ such that $v'$ belongs to the block corresponding to $u$. Let us compute two values $d_l[v']$ and $d_p[v']$ for each vertex of $T$. They are the lengths of the shortest paths that cover all the vertices of $T$ from  blocks corresponding to vertices of the subtree with root $u$ in the tree $A$.
\begin{itemize}
\item  $d_l[v']$ is the length of the path that starts and finishes in $v'$;
\item  $d_p[v']$ is the length of the path that starts in $v'$ and finishes somewhere in the subtree.
\end{itemize}


For a vertex $u$ of the tree $A$, we say that $d_l[u]=d_l[v']$ and $d_p[u]=d_p[v']$  if $u$ corresponds to a vertex $v'$. If $u$ corresponds to a cycle, then we choose $v'$ as the vertex of the cycle visited by the depth-search first (DFS) algorithm \cite{cormen2001} first if we start from $r$.

Let us consider a vertex $v'$ of $T$ and corresponding vertex $u$ of $A$. Let $p$ be a parent node of $u$, and $\textsc{Children}(u)$ be its children in $A$ if the root is the vertex $r$. If $u$ corresponds to a single vertex, then
\[d_l[v']=d_l[u] = \sum\limits_{i \in \textsc{Children}(u)} (d_l[i] + 2w(i,u))\] because we should visit all children subtrees and return to the vertex. We pass each edge that connects the vertex $u$ with a child $i$ two times (down and up).
\[d_p[v']= d_l[u] = \sum\limits_{i \in \textsc{Children}(u)} (d_l[i] + 2w(u,i)) - \max\limits_{i \in \textsc{Children}(u)} (d_l[i] - d_p[i] + w(u,i))\] because we should visit all children vertices and return to the vertices for each child except one. We choose this exceptional child that minimizes the result. See Figure \ref{fig:cactus02} for an example.
\begin{figure}[!ht]
\begin{center}
\includegraphics[width=0.5\textwidth]{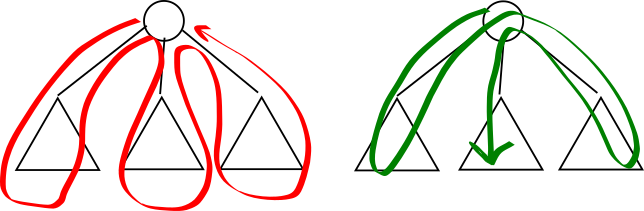}
\caption{\label{fig:cactus02} The left picture: computing $d_l[v']$ as a sum of $d_l[i]$  of children. The right picture: computing $d_p[v']$ as a sum of $d_l[i]$ of children except one child which $d_p[i]$ used.}    
\end{center}
\end{figure}

If $u$ corresponds to a cycle vertex, then we are interested in $d_l[v']=d_l[u]$, where $v'$ is the vertex that is the DFS algorithm started from $r$ visits first in the cycle.
$d_l[v']=d_l[u] = L_c+\sum\limits_{i \in \textsc{Children}(u)} (d_l[i] + 2w(i,u)),$ where $L_c$ is the minimum of two values 
\begin{itemize}
    \item  $\sum\limits_{(i,j)\mbox{ from the cycle of }u}w(i,j)$ and 
    \item $2\sum\limits_{(i,j)\mbox{ from the cycle of }u}w(i,j)-2\max\{w(i,j):(i,j)\mbox{ from the cycle of }u\}$.
\end{itemize}  Here, we should visit all the children subtrees, return to the cycle, and visit all vertices of the cycle. The $L_c$ is the length of the journey by the cycle. We can visit all vertices of the cycle by two ways: just go by cycle (that is the first option for $L_c$), and visit all edges except the largest one  (that is the second one). Note that for the second option, we should visit all edges twice: going forward and back. See Figure \ref{fig:cactus03} for an example.
\begin{figure}[!ht]
\begin{center}
\includegraphics[width=0.55\textwidth]{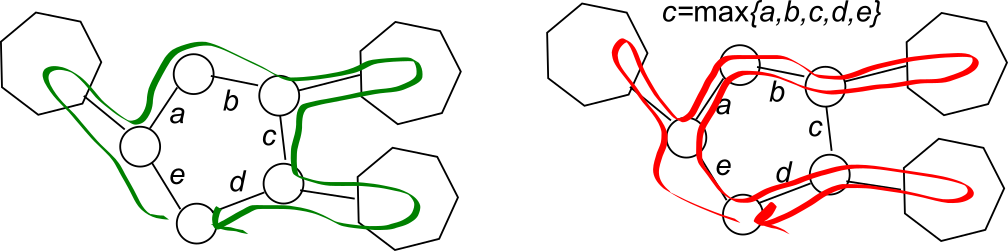}
\caption{\label{fig:cactus03} Choosing minimum from two options:just go by cycle and visit all edges except the largest one }    
\end{center}
\end{figure}

For computing $d_p[v']= d_l[u]$ we iterate by edges of the cycle corresponding to $u$, remove the edge and compute $d_j$ for $j$-th edge, where $d_j = \sum\limits_{i \in \textsc{Children}(u)} (d_l[i] + 2w(u,i)) - \max\limits_{i \in \textsc{Children}(u)} (d_l[i] - d_p[i] + w(u,i)).$ Here after removing the edge, we get several additional vertices in the tree $A$ each of them corresponds to a vertex of the cycle. For each removing edge we should recompute $d_l$ and $d_p$ values for these vertices. After the step we return the removed edge back. Finally, $d_p[v']= d_l[u] = \min\{ d_1\dots,d_t\}$, where $t$ is the size of the cycle. 

If the root $r$ corresponds to a cycle, then we try all vertices of the cycle and try to remove each edge of the cycle one by one and return it. For each of these case we obtain the tree where the root corresponds to a single node.

For a fixed root $r$ and a fixed removed edge from the cycle if $r$ belongs to a simple cycle, let $u$ be the vertex of $A$ corresponded to $r$. We choose two different $i,j\in \textsc{children}(u)$ such that $d_l[i]-d_p[i]+d_l[j]-d_p[j]$ is maximum. Then, the result is $d_l[r]-d_l[i]-d_p[i]+d_l[j]-d_p[j]$ because we should start from one of the children and finish in another. The value $d_l[r]$  already contains $d_l[i]$ and $d_l[j]$, that means we enter to the $i$-th and $j$-th children and com back to $r$. At the same time we want to start from $j$-th (we should use $d_p[j]$ in the sum but not $d_l[j]$), and finish in $i$-th (we should use $d_p[i]$ in the sum but not $d_l[i]$). See Figure \ref{fig:cactus04} for an example.
\begin{figure}[!ht]
\begin{center}
\includegraphics[width=0.3\textwidth]{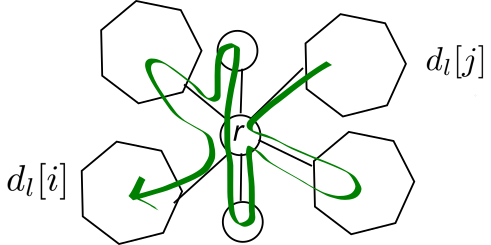}
\caption{\label{fig:cactus04} Starting from $j$-th child, go throw all other children except $i$-th with coming back and finishing in $i$-th child.}    
\end{center}
\end{figure}

We choose the minimal result for all possible roots $r$ and removed edges of the cycle if $r$ belongs to a simple cycle.

Let us estimate the complexity of the presented algorithm.

\begin{theorem}\label{th:wnshpath}
    The presented algorithm solves the 1-SNSCP problem for a cactus, and the time complexity is $O(n^3)$. (See Appendix \ref{apx:twnshpath})
\end{theorem}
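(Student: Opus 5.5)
The proof has two parts: correctness of the dynamic programme and its running time. We argue throughout on the weighted vertex cactus $T$. By the first lemma of this section, a minimum-weight $1$-covering path in $T$ maps to a shortest $1$-covering path in $G$. So it suffices to show that the procedure returns a minimum-weight $1$-covering path of $T$ once the leaves of $G$ are removed. Removing leaves is harmless. An optimal path never needs to enter a leaf, since the leaf's unique neighbour must be visited or covered anyway; we would check separately the degenerate case in which that neighbour is itself only covered. After removal, every remaining vertex must be visited or adjacent to a visited vertex.

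For correctness we use induction over the block tree $A$ rooted at a single-vertex block $r$. We fix the root, and, if the root block is a cycle, we also fix a deleted cycle edge. The invariant is that $d_l[v']$ is the minimum weight of a closed walk from $v'$ covering the subtree of the block $u$, and $d_p[v']$ is the minimum weight of a walk from $v'$ ending anywhere in that subtree.

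The exchange argument for a single-vertex block runs as follows. Any walk from $v'$ covering the subtree decomposes into excursions into the child subtrees. Each child $i$ is entered through the unique edge $(u,i)$. In a closed walk every child's portion is a closed walk from the attachment vertex, costing at least $d_l[i]+2w(u,i)$. In an open walk at most one child is left unfinished, costing at least $d_p[i]+w(u,i)$. This gives the stated recurrences with the max-savings term.

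For a cycle block we use the standard fact that a walk covering a cycle either traverses it entirely or omits at least one edge. In the second case it is a walk on the path obtained by deleting that edge, where every edge is traversed twice in the closed case. This yields $L_c$. For $d_p$ it justifies trying each deleted edge and reducing to the tree recurrence on the resulting path-with-subtrees. The root combination, which starts in child $j$ and ends in child $i$, follows from the same decomposition with two unfinished branches. Minimising over all roots and deleted edges covers every possible start and end block of an optimal path.

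The main obstacle is the covering (rather than visiting) semantics. A child block need not be entered at all if its vertices are all adjacent to visited vertices, and $0$-weight edges between copies $v'_1,v'_j$ interact with this. I would handle it by allowing $d_l[i]$ and $d_p[i]$ to take the value $0$, with no connecting cost, whenever the child's block consists only of neighbours of $u$ with no deeper descendants. I would then verify that the recurrences with this base case still satisfy the exchange argument. If this is delicate, I would argue it via the leaf-removal step plus a second application of the same reasoning to vertices whose entire subtree lies within distance one.

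For complexity, Lemma~\ref{lm:cactus-vertex} and Lemma~\ref{lm:constr-T} give $|V'|,|E'|=O(n)$ and construction time $O(n+m)=O(n)$, since a cactus has $m\le \frac{3}{2}(n-1)$. For a fixed root and fixed deleted edge, one bottom-up pass of the DP costs $O(n)$. Inside a cycle of length $t$, trying each of its $t$ removed edges and recomputing values along the cycle costs $O(t^2)$, and $\sum t^2\le n\sum t = O(n^2)$ over all cycles. So a single rooting costs $O(n^2)$ in the worst case, or more sharply $O(n)$ amortised per cycle vertex. There are $O(n)$ choices of root, and for cycle roots $O(t)$ choices of deleted edge, but $\sum_r O(t_r)$ stays $O(n)$ because each root vertex lies in one cycle of the vertex cactus. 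The total is therefore $O(n)\cdot O(n^2)=O(n^3)$, completing the proof.
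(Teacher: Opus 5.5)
\textbf{Correctness gap.} The paper's appendix proof of this theorem only counts running time; correctness is asserted from the informal derivation of the recurrences. Your correctness half goes beyond the paper, but it has a genuine gap that cannot be closed for the recurrences as stated. Your exchange argument gives three lower bounds: each child costs at least $d_l[i]+2w(u,i)$, a cycle costs at least $L_c$, and after deleting a cycle edge every path edge is traversed twice. These are lower bounds only for walks that \emph{visit} every remaining vertex. Under 1-covering semantics a walk may skip any vertex adjacent to a visited one, not only leaves, so none of these bounds holds. Two examples show this.
\begin{itemize}
\item Take $G=C_6$ with vertices $v_1,\dots,v_6$. There are no leaves and $T=G$. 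For every root and every deleted edge the recurrences return $5$ (a Hamiltonian path). But $(v_2,v_3,v_4,v_5)$, of length $3$, is already 1-covering.
\item Take the path $l-w-x$ with a triangle $x,y,z$ attached at $x$. The recurrences return $3$, while $(w,x)$ of length $1$ suffices.
\end{itemize}
Your patch (zero cost for a child block consisting only of neighbours of $u$) touches neither example. $C_6$ has no child blocks, and the triangle block contains $y,z$, which are not adjacent to $w$. What is missing is a DP with domination states (vertex visited / already dominated from below / must be dominated from above), as in connected-domination algorithms. With the given recurrences your induction invariant is false, and the $C_6$ example shows the presented algorithm is not exact for 1-SNSCP. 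Your leaf-removal reduction is also misstated: after deleting the leaves, their neighbours must be \emph{visited}, not merely covered. For instance, $(b)$ 1-covers the reduced graph $b-c$ of $a-b-c-d$ but misses $d$ in $G$.

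\textbf{Complexity miscount.} On the part the paper does prove, your count also slips. Summing $t_r$ over root vertices gives $\sum_C t_C^2$, which is $\Theta(n^2)$ for a single long cycle, not $O(n)$. With a full $O(n^2)$ pass for each (root, deleted edge) configuration, since other cycles still need their $O(t^2)$ internal trials, this gives $O(n^4)$. The ingredient you need, and the one the paper uses, is this: the $d_l,d_p$ values of everything hanging off a cycle do not depend on which vertex of that cycle is the root or which of its edges is removed. So after one full $O(|V'|^2)$ pass per root block, only the $O(t)$ values on the cycle are recomputed for each of its $t^2$ configurations. This gives $O(h|V'|^2+\sum t^3)=O(n^3)$.
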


\section{Conclusion}\label{sec:conclude}
In this paper, we present an algorithm for constructing a shallow quantum circuit for the quantum hashing algorithm for a device with a specific qubit connectivity graph. The algorithm uses solution of the 1-SNSCP problem as the main part. For a cactus graphs, our result gives better results than the results of the existing method for circuit constructing for arbitrary qubit connectivity graphs \cite{kkcw2025}. Additionally, our algorithm has polynomial time complexity, and the algorithm of  \cite{kkcw2025} has exponential time complexity. The algorithm for 1-SNSCP in the case of a cactus graph is interesting from the graph theory point of view since it has a strong connection with the Shortest covering path \cite{cpr1994}.
\bibliographystyle{unsrt}
\bibliography{tcs}
\newpage
\appendix
\section{Quantum Fingerprinting or Quantum Hashing}\label{apx:hash}
Let us present some basic concepts of quantum fingerprinting technique from \cite{af98,an2008,an2009,bcwd2001}. This technique is well-known and allows us to compute a short hash or fingerprint that identifies a (potentially long) string of data with high probability.

For the problem being solved we choose some positive integer $m$, an error probability bound $\varepsilon > 0$, fix $t = \lceil(2/\varepsilon) \ln 2m\rceil$, and construct a mapping $g : \{0, 1\}^n\to \mathbb{Z}$. Then for arbitrary binary string $\sigma = (\sigma_1 \dots \sigma_n)$ we create it's fingerprint $|h_\sigma\rangle$ composing $t$ single qubit fingerprints $|h_\sigma^i\rangle$:
\[|h_\sigma^i\rangle=\cos\frac{2\pi k_i g(\sigma)}{m}|0\rangle + \sin\frac{2\pi k_i g(\sigma)}{m}|0\rangle,\]\[
|h_\sigma\rangle=\frac{1}{\sqrt{t}}\sum_{i=1}^{t}|i\rangle|h^i_{\sigma}\rangle\]

Here the last qubit is rotated by $t$ different angles about the $\hat{y}$ axis of the Bloch sphere. The chosen parameters $k_i \in\{1\dots,m-1\}$, for $i\in\{1\dots t\}$ are ``good'' in the following sense. A set of parameters $K = \{k_1,\dots, k_t\}$ is called ``good'' for $g\neq 0 \mod m$ if
\[\frac{1}{t^2}\left(\sum_{i=1}^t \cos\frac{2\pi k_i g}{m}\right)^2<\varepsilon\]
The left side of the inequality is the squared amplitude of the basis state $|0\rangle^{\otimes \log_2 t} |0\rangle$ if the operator
$H^{\otimes \log_2 t}\otimes I $ has been applied to the fingerprint $|h_\sigma\rangle$. Informally, that kind of set guarantees, that
the probability of error will be bounded by a constant below $1$.

The following lemma from \cite{akv2008,an2008,an2009} proves the existence of a ``good'' set.
\begin{lemma}[\cite{akv2008}]
There is a set $K$ with $|K| = t = \lceil(2/\varepsilon) \ln 2m\rceil$  which is ``good'' for all $g\neq 0 \mod m$.
\end{lemma}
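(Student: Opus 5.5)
The plan is a probabilistic-method argument: pick the parameters $k_1,\dots,k_t$ independently at random, show that a fixed $g\not\equiv 0 \pmod m$ is ``bad'' with probability at most $1/m$, and take a union bound over the at most $m-1$ nonzero residues $g$. Since the goodness condition depends on $g$ only modulo $m$, it is enough to consider $g\in\{1,\dots,m-1\}$.

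\textbf{Step 1: the mean vanishes.} Fix $g\not\equiv 0 \pmod m$ and let $k$ be uniform on $\mathbb{Z}_m$. I will show that $\mathbb{E}\cos\frac{2\pi k g}{m}=0$. Put $d=\gcd(g,m)$. As $k$ ranges over $\mathbb{Z}_m$, the residue $kg \bmod m$ runs uniformly over the multiples of $d$. Hence $e^{2\pi i kg/m}$ is uniform over the $(m/d)$-th roots of unity. Because $g\not\equiv 0$, we have $m/d>1$, so these roots of unity sum to $0$. Taking real parts gives the claim.

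\textbf{Step 2: concentration for a fixed $g$.} Let $X_i=\cos\frac{2\pi k_i g}{m}$. These are i.i.d., take values in $[-1,1]$, and have mean $0$. By Hoeffding's inequality, with threshold $s=\sqrt{\varepsilon}$,
\[\Pr\Big[\Big|\tfrac1t\textstyle\sum_{i=1}^t X_i\Big|\ge \sqrt{\varepsilon}\Big]\le 2\exp\!\big(-2t^2\varepsilon/(4t)\big)=2e^{-t\varepsilon/2}.\]
Since $t\ge (2/\varepsilon)\ln 2m$, the right-hand side is at most $2e^{-\ln 2m}=1/m$. The complementary event is exactly $\frac{1}{t^2}\big(\sum_i X_i\big)^2<\varepsilon$, which is goodness for this $g$.

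\textbf{Step 3: union bound.} There are only $m-1$ relevant residues. So the probability that the random set fails for some $g\not\equiv 0$ is at most $(m-1)/m<1$. Therefore some choice of $K$ is good for all such $g$ simultaneously, and $|K|=t$ as required.

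\textbf{Main obstacle: the range of the $k_i$.} The definition requires $k_i\in\{1,\dots,m-1\}$, whereas the clean zero-mean computation in Step 1 uses $k$ uniform on all of $\mathbb{Z}_m$.

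My first approach would follow the original Ambainis--Freivalds convention and allow $k_i\in\mathbb{Z}_m$. I would then argue that $k_i=0$ plays no essential role, since the statement's restriction is inherited from \cite{akv2008}.

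If the restriction must be respected exactly, I would sample $k_i$ uniformly from $\{1,\dots,m-1\}$ instead. The mean then becomes $\mu=-1/(m-1)$. The upper tail only improves under this shift. For the lower tail I would apply Hoeffding around $\mu$ and absorb the shift $|\mu|\le 1/(m-1)$ into the threshold. This requires $\sqrt{\varepsilon}\gg 1/(m-1)$ and, in the worst case, a slightly larger constant in $t$. This bookkeeping is the part I expect to need the most care.
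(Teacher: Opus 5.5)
The paper does not prove this lemma; it only cites it. Your Steps 1--3 are the standard probabilistic argument behind the citation, and they are correct as written when $k_1,\dots,k_t$ are drawn independently and uniformly from $\mathbb{Z}_m=\{0,\dots,m-1\}$. The mean is $0$. Hoeffding with range length $2$ and threshold $\sqrt{\varepsilon}\,t$ gives $2e^{-\varepsilon t/2}\le 1/m$. The union bound over the $m-1$ nonzero residues then leaves positive probability. State explicitly that $K$ is a multiset (repetitions allowed). This is forced anyway, since $t$ can exceed $m$.

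The genuine problem is your fallback option for keeping $k_i\in\{1,\dots,m-1\}$. This is not a matter of careful bookkeeping: under that restriction the statement is false in general.

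For every $k\not\equiv 0\pmod m$ one has $\sum_{g=1}^{m-1}\cos\frac{2\pi kg}{m}=-1$. Indeed, the full sum over $g\in\mathbb{Z}_m$ vanishes and the $g=0$ term is $1$. Now take any $k_1,\dots,k_t\in\{1,\dots,m-1\}$ and put $S_g=\sum_{i=1}^t\cos\frac{2\pi k_i g}{m}$. Summing $S_g$ over $g=1,\dots,m-1$ gives exactly $-t$. So some $g$ has $|S_g|\ge t/(m-1)$, and that $g$ violates goodness whenever $\varepsilon\le 1/(m-1)^2$, for every $t$. Concretely, $m=2$ forces every $k_i=1$ and $S_1=-t$. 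For $m=3$, every choice gives $S_g^2/t^2=1/4$.

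So the condition $\sqrt{\varepsilon}\gg 1/(m-1)$ you ran into is a real obstruction, not an artifact of Hoeffding. No adjustment of constants or of $t$ can make option (b) work for all $\varepsilon$ and $m$.

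Commit to your option (a): prove the lemma with $k_i\in\{0,\dots,m-1\}$. That is the reading under which $t=\lceil(2/\varepsilon)\ln 2m\rceil$ is valid. Add a remark that the range $\{1,\dots,m-1\}$ in the paper's definition is admissible only under an additional hypothesis relating $\varepsilon$ and $m$. With that reading, your proof is complete.
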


We use this result for fingerprinting technique \cite{akv2008} choosing the set $K = \{k_1,\dots, k_t\}$ that is ``good'' for all $g = g(\sigma)\neq 0$. It allows us to distinguish those inputs whose image is $0$ modulo $m$ from the others.

That hints at how this technique may be applied:
\begin{enumerate}
\item We construct $g(x)$, that maps all acceptable inputs to $0$ modulo $m$ and others to arbitrary non-zero (modulo $m$) integers.

\item After the necessary manipulations with the fingerprint, the $H^{\otimes \log_2 t}$ operator is applied to the first $\log_2 t$ qubits. This operation ``collects'' all cosine amplitudes at the all-zero state. That is, we obtain the state of the type
\[|h'_\sigma\rangle=\frac{1}{t}\sum_{i=1}^{t}\cos\left(\frac{2\pi k_i g(\sigma)}{m}\right) |00\dots 0\rangle|0\rangle + \sum_{i=2}^{2t}\alpha_i|i\rangle\]
\item This state is measured on the standard computational basis. Then we accept the input if the outcome is the all-zero state. This happens with the probability
\[Pr_{accept}(\sigma)=\frac{1}{t^2} \left(\sum_{i=1}^{t}\cos\frac{2\pi k_i g(\sigma)}{m}\right)^2,\]
which is $1$ for the inputs, whose image is $0 \mod m$ and is bounded by $\varepsilon$ for the others.
\end{enumerate}

Due to \cite{ziiatdinov2023gaps,zkk2025,kalis18}, the algorithm can be implemented using the shallow circuit presented in Figure \ref{fig:qf}. In that case, the angles $\frac{2\pi k_i}{m}$ should be linear combinations of $\xi_j$. Due to \cite{ziiatdinov2023gaps,zkk2025}, it is not known whether we can keep the same number of qubits or should we increase the number of qubits exponentially. At the same time, computational experiments show \cite{ziiatdinov2023gaps,zkk2025} that we can find enough good parameters $\xi_i$ such that $t$ qubits are enough for $\varepsilon$ error probability.

\subsection{The Quantum Finite Automaton for $MOD_p$ Language}\label{sec:qfamodp}

The general description of quantum hashing or quantum fingerprinting technique from \cite{af98,an2008,an2009,akv2008,bcwd2001} can be found in Appendix \ref{apx:hash}.  This technique is well-known and allows us to compute a short hash or fingerprint that identifies a (potentially long) string of data with high probability.
Here we present a quantum finite automaton for the unary $MOD_p=\{a^\ell: \ell$ mod $p=0$, $p$ is prime$\}$ language as a simple example of this technique. At the same time, it shows the the power and have all the main properties of the technique. 

A quantum finite automaton (QFA) for unary language is a tuple $(Q, Q_{acc}, Q_0, U_a, U_{\$}, U_{\cent})$. Here $Q$ is a set of $2^n$ quantum basis states ($n$ qubits), $Q_{acc}\subset Q$ is the set of accepting states, $Q_0\in Q$ is a starting state, and $U_a, U_{\$}$ and $ U_{\cent}$ are unitary transformations on $2^n$ states.

The automaton processes $\$ \omega \cent$ word for input word $\omega=a^\ell$ for some positive integer $\ell$, $\$$ is the  left end marker, and $\cent$ is the right end marker. The automaton works with $n$-qubit register, starts from the $|Q_0\rangle$ state. We apply $U_{\$}$ to the register on left end marker, then we apply $U_{a}$ on each symbol of $w$, and finally, apply  $U_{\cent}$ on the right end marker. Finally, we measure the quantum register and accept the word if the state belongs to $Q_{acc}$. The probability of this event is $\sum_{Q_i\in Q_{acc}}|\alpha_i|^2$, where $\alpha_i$ is the amplitude of $i$-th state. A language is recognizable by an automaton if the automaton accepts member words with probability $1-\varepsilon$, and rejects non-members with probability $1-\varepsilon$ for some $0<\varepsilon<0.5$. 
The automaton for $MOD_p$ language \cite{an2008,an2009} is such that $n=\log_2((2/\varepsilon) \ln 2p)$, $|Q_0\rangle=|0\rangle$, $Q_{acc}=\{|0\rangle\}$.
The transformations $U_{\$}$ and $U_{\cent}$ are Hadamard transformations $H$ on each qubit except the $0$-th one. $U_a$ is the $UCR_y^{n-1}$ gate with angles $\xi_i=\frac{2\pi k_i}{p}$ for $k_0,\dots, k_{2^{n-1}-1}\in\{1,\dots,p-1\}$. Here $k_i$ are good coefficients from \cite{an2008} discussed in the previous section.

Note that  $p$ can be any integer. At the same time, for prime $p$, the quantum automata can use exponentially less memory comparing to classical counterparts \cite{af98}. There are results \cite{AY12,agky16} that use fingerprinting for non-prime $p$. 


\section{Implementation of the Algorithm for Constructing a Quantum Circuit}\label{apx:algo-impl}
Let us present a procedure that implements the algorithm discribed in Section \ref{sec:hash}. We assume that we have $\textsc{cR}(u,v)$ procedure that applies the control rotation operator $CR_y$ to $u$ as a control qubit and $v$ as a target one. The angle $\xi_r$ corresponds to the qubit $q_r$ associated with the vertex $v$. Additionally, we have $\textsc{swap}(u,v)$ procedure that applies swap gate to $u$ and $v$ qubits.

\begin{algorithm}[H]
\caption{Implementation of $\textsc{ConstructForPath}(P)$ procedure. Algorithm of constructing circuit for quantum hashing or quantum fingerprinting for a path $P=(v_{i_1}, \dots,v_{i_k})$}\label{alg:cascade}
\begin{algorithmic}
\State $P=(v_{i_1}, \dots,v_{i_k})\gets \textsc{Shortest1CP}(G)$\Comment{Step 1}
\State $U\gets\emptyset$\Comment{Step 2}
\For{$j\in\{1,\dots,k-1\}$}
\For{$v'\in\textsc{Neighbors}(v_{i_j})$}\Comment{Step 3}
\If{$v'\not\in U$ and $v'\neq v_{i_{j+1}}$}
\State  $\textsc{cR}(v',v_{i_j})$
\State $U\gets U\cup\{v'\}$
\EndIf
\EndFor
\If{$v_{i_{j+1}}\not\in U$}
\State  $\textsc{cR}
(v_{i_{j+1}},v_{i_j})$\Comment{Step 4}
\State $U\gets U\cup\{v_{i_{j+1}}\}$
\EndIf
\State $\textsc{swap}(v_{j},v_{j+1})$\Comment{Step 5}
\EndFor
\For{$v'\in\textsc{Neighbors}(v_{i_k})$}\Comment{Step 6}
\If{$v'\not\in U$}
\State  $\textsc{cR}(v',v_{i_k})$
\State $U\gets U\cup\{v'\}$
\EndIf
\EndFor
\end{algorithmic}
\end{algorithm}

\section{The Proof of the Theorem \ref{th:qh1}}\label{apx:qh1}

\textbf{Theorem \ref{th:qh1}} {\em
    The CNOT cost of the circuit for $\ell$ applications of quantum hashing operator $U_s$ generated by the presented algorithm is $(3k + 2(n-k'))\ell-5\ell + 2$, where $k$ is the length of the 1-covering path $P=(v_{i_1},\dots,v_{i_k})$, and $k'$ is the number of distinct vertices in $P$, and $n$ is the number of vertices in the qubit connectivity graph. 
}
\begin{proof}
If we look at the algorithm, then we can see that it is a sequence of one of three options:
    (i) a pair of $CR_y$ and $SWAP$ gates if $v_{j+1}\not\in U$;
    (ii) $SWAP$ gate if $v_{j+1}\in U$;
    (iii) $CR_y$ gate for  $v'$ and $v_{i_{j+1}}$.
%
The CNOT cost for pairs of $CR_y$ and $SWAP$ gates is $3$; for the $SWAP$ gate, it is $3$; for the $CR_y$ gate, it is $2$.
On applying $U_s$ once, the number of steps in which we apply pairs of gates $CR_y$ and SWAP or the SWAP gate is $k-1$ because we move the target qubit by the path of length $k$. We apply a single  $CR_y$ (without SWAP gate) to vertices that do not belong to $P$. The number of such nodes is $n-k'$.
At the same time, for two sequential applications of the $U_s$ operator, one $CR_y$ operator disappears, as we discussed before.
So, the CNOT cost of the first $\ell-1$ operators $U_s$ is $(3(k-1) + 2(n-k'-1))(\ell - 1)$, and the CNOT cost of the last operator is $3(k-1) + 2(n-k')$. The total cost is $(3k + 2(n-k'))\ell-5\ell + 2$.
\Endproof\end{proof}

\section{The Proof of the Corollary \ref{cr:path}}\label{apx:path}

\textbf{Corollary \ref{cr:path}} {\em
The CNOT cost of the circuit for $\ell$ applications of the quantum hashing operator $U_s$ generated by the presented algorithm is between $2n\ell-4\ell+2$ and $6n\ell-7\ell+2$, where $k$ is the length of the 1-covering path $P=(v_{i_1},\dots,v_{i_k})$, and $k'$ is the number of distinct vertices in $P$. 
}
\begin{proof}
The minimum possible CNOT cost is in the case when the length of $P$ is minimal. If we have a vertex that is connected with all other vertices of the graph, then we can construct $P$ of length $1$. So, $k=1$ and $k'=1$. Therefore, the total CNOT cost is $2n\ell-4\ell+2$.

Due to Lemma \ref{lm:len-wnsh}, the maximum length of the path is $k=2n-2$. The maximum possible number of distinct vertex in $P$ is $n-2$. So, the total CNOT cost is $(3(2n-2)+4)\ell-5\ell+2=6n\ell-7\ell+2$.
\Endproof\end{proof}
\section{Proof of Lemma \ref{lm:len-wnsh}.}\label{apx:len-wnsh}
\textbf{Lemma 
 \ref{lm:len-wnsh}} {\em
The length of a 1-covering path in a connected graph $G$ of $n$ vertices is at most $2n-3$. 
}

\begin{proof}
Let us consider a spanning tree of the graph $G=(V,E)$. It is a tree $T=(V,E')$, where $E'\subset E$. 
We can construct a non-simple path $P$ that is the Euler tour \cite{cormen2001} of the tree $T$ but does not visit the leaves of the tree. The path covers all the vertices of the graph $G$, but maybe be it is not the shortest. Each edge (except edges incident to leafs)  in the tour is visited at most twice (in the up and down direction). Therefore, the length of the path $len(P)\leq 2n-{\cal L}$, where ${\cal L}$ is the number of leaves, and ${\cal L}\geq 2$. So, we obtain the bound for the number of vertices in the path $2n-2$, and for the length of the path, the bound is $2n-3$.
\Endproof\end{proof}
\section{The Proof of Lemma \ref{lm:cactus-vertex}}\label{apx:cactus-vertex}
\textbf{Lemma 
 \ref{lm:cactus-vertex}} {\em  
For a cactus $G(V,E)$ and constructed $T(V',E')$, the following statements hold: $|V'|\leq 2|V|$, $|E'|\leq |E|+3|V|$
}

\begin{proof}
    Let us consider a tree $B$, where each vertex corresponds to a block of $G$. Here, a block is a cycle or a single vertex that does not belong to any cycle. Two vertices of $B$ are connected if the corresponding blocks are connected by an edge or have a common vertex. Due to properties of a cactus, the graph $B$ is a tree. Therefore, we have at most $h$ edges in the tree, where $h$ is the number of blocks. Note that $h\leq |V|$. It is easy to see that each edge in $B$ corresponds to a connection $(v'_1,v'_j)$ in $T$. Hence, the number of new additional vertices in $T$ is at most $h\leq |V|$. So, we have $|V'|\leq|V| + h\leq 2|V|$.

    Each additional vertex $v'_j$ produces at most three additional edge - one for connecting with $v'_1$, and two for connecting with two vertices of the $j$-th cycle. As we show before, the number of additional vertices is at most $|V|$. Therefore, we have $|E'|\leq 3|V|+|E|$.
 \Endproof
\end{proof}
\section{Proof of Lemma \ref{lm:constr-T}.}\label{apx:constr-T}
\textbf{Lemma 
 \ref{lm:constr-T}} {\em  The time complexity of constructing $T$ is $O(n+m)$. 
}

\begin{proof}
Firstly, using the depth-first search (DFS) algorithm \cite{cormen2001} we find each cycle. The complexity of the algorithm is $O(n+m)$. After that we form the sets $V_2$ and $V_1$. Then, we can form the graph $T$ in linear time. We should add  at most $3n+m$ edges and at most $n$ new vertices. The total complexity is $O(n+m)$.
    \Endproof
\end{proof}

\section{Proof of Theorem \ref{th:wnshpath}.}\label{apx:twnshpath}
\textbf{Theorem 
 \ref{th:wnshpath}} {\em  The presented algorithm solves the 1-SNSCP problem for a cactus, and the time complexity is $O(n^3)$.
}

\begin{proof}
    Let us fix a root vertex $r$. We use the DFS algorithm that has $O(|V'|+|E'|)$ time complexity. At the same time, if we consider a block that is a cycle, it takes $O(t)$ steps, where $t$ is the size of the cycle. In each step, we try to remove one edge of the cycle. In each step, we recompute $d_l$ and $d_p$ for the vertices of the cycle, but other values are not recomputed. So, processing a cycle of size $t$ has $O(t^2)$ time complexity. We process each cycle in a similar way. If the graph $T$ contains $h$ cycles of sizes $t_1,\dots, t_h$, then the total complexity of the algorithm for a fixed root vertex is $O(|V'|+|E'|+t_1^2+\dots+t_h^2)=O(|V'|+|E'|+(t_1+\dots+t_h)^2)=O(|V'|+|E'|+(|V'|)^2)=O(|V'|^2)$.
    Note that $|E'|\leq |V'|^2$.

    We iterate different vertices as a root. Let us consider a vertex $u$ of the tree $A$ that corresponds to a cycle in $T$. Let the size of the cycle is $t$. If we fix one of the vertex of the cycle as a root and try to remove different edges, then we should recompute only $d_l$ and $d_p$ only for vertices of the current cycle, because the values for other vertices are not changed. Therefore, removing $t$ edges and recomputing $d_l$ and $d_p$ for each step has $O(t^2)$ time complexity. Then we try different vertices of the cycle as a root, and have the same situation. So, checking one cycle has $O(|V'|^2)$ time complexity for the first running DFS, and additional $O(t^3)$ time complexity for checking all other options for root vertex of the cycle and different removed edges. Therefore, the total time complexity for checking all vertices of cycle as a root is $O(|V'|^2+t^3)$.  If the graph $T$ contains $h$ cycles of sizes $t_1,\dots, t_h$, then the total complexity is $O(h|V'|^2+t_1^3+\dots+t_h^3)=O(|V'|^3+(t_1+\dots+t_h)^3)=O(|V'|^3+|V'|^3)=O(|V'|^3)=O(n^3)$ due to Lemma \ref{lm:cactus-vertex}. If we add complexity of constructing $T$ from Lemma \ref{lm:constr-T}, then we obtain $O(n^3+n+m)=O(n^3)$.
\Endproof\end{proof}  

\section{Quantum Fourier Transform}\label{apx:qft}
QFT is a quantum version of the discrete Fourier transform. The definitions of $n$-qubit QFT and its inverse are as follows:
\[QFT|j\rangle = \sum_{k=0}^{2^n-1}e^{\frac{2\pi i jk}{2^n}}|k\rangle,\]
\[QFT^{-1}|j\rangle = \sum_{k=0}^{2^n-1}e^{-\frac{-2\pi i jk}{2^n}}|k\rangle,\]
The $n$-qubit QFT circuit requires $0.5n^2 - 0.5n$ control phase ($CR_d$) gates and $n$ Hadamard ($H$) gates if we have no restriction on the application of two-qubit gates (See Figure \ref{fig:cqft}).

\begin{figure}[H]
\includegraphics[width=0.45\textwidth]{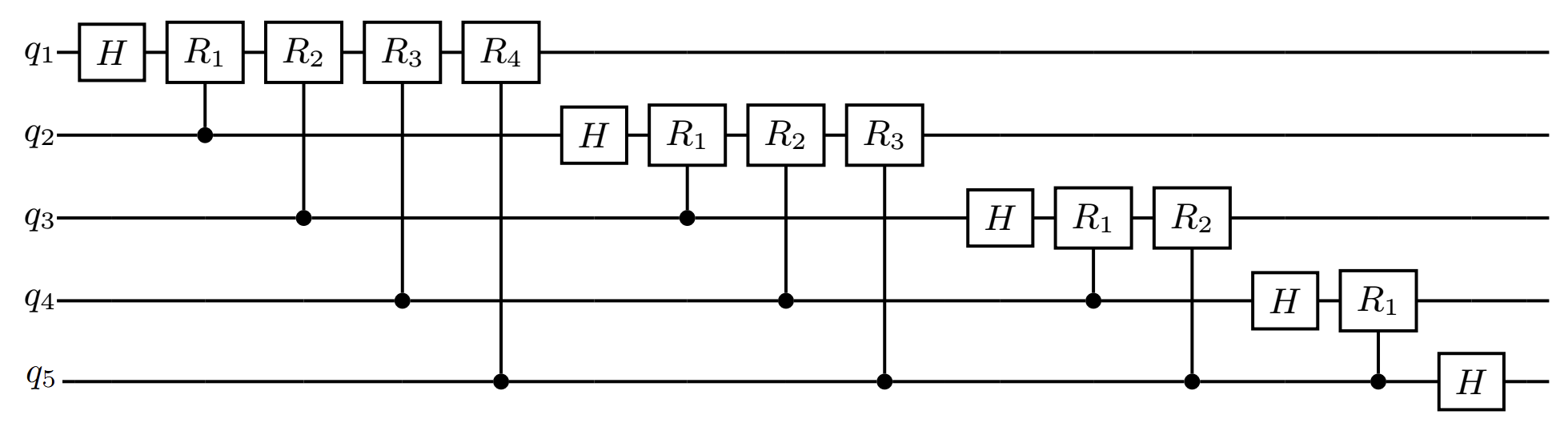}
\caption{\label{fig:cqft} A quantum circuit for Quantum Fourier Transform algorithm for fully connected $5$ qubits}
\end{figure}

The $CR_d$ gate is represented by basic gates that require two CNOT and three $R_z$ gates \cite{barenco1995elementary}. Therefore, $n^2 - n$ CNOT gates are required to construct an $n$-qubit QFT circuit. At the same time, if a quantum device has the LNN architecture, then for implementing the QFT, the number of CNOT gates is much larger than $n^2 - n$ \cite{park2023reducing}. If we consider a general graph, then the situation is much worse than \cite{k2024aliya}.

\section{Implementation of the Circuit for Quantum Fourier Transform}\label{apx:qft2}

Let us consider a quantum device with some qubit connectivity graph $G=(V,E)$. We assume that $G$ is a connected graph. Here we present a method that allows us to construct a circuit that implements the Quantum Fourier Transform (QFT) algorithm on this device. More information on the QFT algorithm can be found in Appendix \ref{apx:qft}. If we do not have restrictions for applying two-qubit gates (when $G$ is a complete graph, for instance), then the circuit is presented in Figure \ref{fig:cqft}.

We can split the circuit for the QFT algorithm into a series of control phase gates cascades depending on the target qubit for control phase operations. The $r$-th cascade uses $q_r$ as the target qubit (Figure \ref{fig:qft-cascade}).

Assume that we have a $\textsc{CascadeForPath}(P_r,r)$ procedure that constructs the $r$-th cascade of the circuit for the QFT algorithm for a path $P_r$. 
Here $P_r$ is a path that ``covers'' only vertices corresponding to the qubits used in the current cascade. We say that a path covers a vertex if the vertex is visited by the path or the vertex is connected by an edge with some vertex from the path. Because we can apply two-qubit gates only for adjacent vertices, the procedure moves the target qubit by the path $P_r$ from the first vertex of the $P_r$ to the last one. We move the target qubit using the SWAP gate. During the ``travel'' of the target qubit, we apply the control phase operator to each neighbor vertex because the path $P_r$ covers all the vertices that correspond to the cascade. This strategy allows us to implement the cascade. In the end of the ``travel'', we move the target qubit to one of the neighbors of the last vertex of $P_r$ and exclude it from the next steps because it does not participate in rest cascades.  

Firstly, we present the main algorithm in Section \ref{sec:qft1}. Then we present the detailed algorithm for the $\textsc{CascadeForPath}(P_r,r)$ procedure in Section \ref{sec:qft2}. After that we discuss the complexity of the circuit in Section \ref{sec:qft3}.

\subsection{The Main Algorithm}\label{sec:qft1}
Let us present the entire algorithm for constructing the quantum circuit for the QFT algorithm. 

\subsubsection{Vertices and Qubits Correspondence}
Firstly, we should assign original qubits to the vertices. Consider two sequences:
\begin{itemize}
    \item $A_1,\dots, A_n$ are the indexes of initial positions of qubits. If $A_i=j$ on some step, it means that the vertex $v_i$ contains an original qubit that was in $v_j$ before starting the algorithm.
    \item $S_1,\dots S_n$ are the final positions of the qubits. If $S_i=j$, then the $j$-th  original qubit is located in the vertex $v_i$ before starting the algorithm.
\end{itemize}

Our main goal is to compute the sequence $S_1,\dots S_n$. Let us present the algorithm.

\begin{itemize}
    \item[] \textbf{Step 0.} We assign $A_i\gets i$ for each $i\in\{1,\dots,n\}$. Let $r\gets 1$ be the number of a cascade.
    \item[] \textbf{Step 1.} We find a the shortest covering path $P_r=(v_{i_1},\dots,v_{i_k})$.
    \item[] \textbf{Step 2.} We assign $S_{A_{i_1}}\gets r$ 
    \item[] \textbf{Step 3.} We move the first element by the path, i.e. we swap $A_{i_j}$ and $ A_{i_{j+1}}$ for $j\in\{1,\dots,k-1\}$.
    \item[] \textbf{Step 4.} We choose a neighbor vertex $v_q$ of $v_{i_k}$ with the maximal index that is not visited by the path $P_r$. Then we assign $A_{i_k}\gets A_{q}$, and we exclude the vertex $v_q$ from the graph\footnote{In fact, we do not exclude it, but mark as excluded. After invocation of this algorithm, we should be able to restore the whole graph.}.
    \item[] \textbf{Step 5.} We go to the next cascade $r\gets r+1$. If $r\leq n-2$, then we go to Step 1, and go to Step 6 otherwise.
    \item[] \textbf{Step 6.} In this step, we have two vertices in the graph that are not excluded and connected. Assume that there are $v_q$ and $v_t$, and $q<t$. Then, we assign $S_{A_q}\gets n-1$, and $S_{A_t}\gets n$.  
\end{itemize}

Assume that the $\textsc{ConstructS}(G)$ procedure contains it.

\subsubsection{The Algorithm}
The enumeration $S$ is such that the algorithm works well, and the algorithm for computing $S$ is very similar to the main algorithm.

First, we restore the graph. Then, on each cascade, the $r$-th original qubit is located at the starting vertex of the path $P_r$. For each cascade, we move the $r$-th original qubit by the path $P_r$ using the SWAP gate and then to the neighbor of the last vertex of the path with the maximal index. After that, we exclude the qubit from the graph. 

We use $Q_i$ as the current position of the $i$-th original qubit and $T_j$ as an index of an original qubit located in the vertex $v_j$. Initially $T_j\gets S_j$, $Q_{T_j}\gets j$ for each $j\in\{1,\dots, n\}$.

The construction of a cascade is presented by the procedure $\textsc{CascadeForPath}(P_r,r)$. The algorithm is as follows.

\begin{itemize}
    \item[] \textbf{Step 0.} We associate the $S_j$-th original qubit with the vertex $v_j$, i.e. $T_j\gets S_j$, $Q_{T_j}\gets j$, for $j\in\{1,\dots,n\}$. 
    
    Let $r\gets 1$ be the number of a cascade.
    
    \item[] \textbf{Step 1.} We construct the $r$-the cascade using $\textsc{CascadeForPath}(P_r,r)$ and keep the $T$ and $Q$ indexes actual.
    \item[] \textbf{Step 2.} We choose a neighbor vertex $v_q$ of $v_{i_k}$ with the maximal index that is not visited by the path $P_r$ and exclude it because the $r$-th qubit was moved there during the $\textsc{CascadeForPath}(P_r,r)$ procedure.
    \item[] \textbf{Step 3.} We go to the next cascade $r\gets r+1$. If $r\leq n$, then we go to Step 1, and stop otherwise.
\end{itemize}

\subsection{Quantum Circuit for One Cascade}\label{sec:qft2}

Let us present the algorithm for generating a quantum circuit for the $r$-th cascade, that is the procedure $\textsc{CascadeForPath}(P_r,r)$.

In the $r$-th cascade, we use the $r$-th qubit as a target for the control phase gates. Due to the enumeration of qubits, it is located in the vertex $v_{i_1}$, where $P_r=(i_1,\dots,i_k)$. 

We move the target qubit by the path $P_r$ and for each position of the target qubit, we apply control phase gates for each neighbor vertex. Finally, we move the target qubit to the neighbor of $v_{i_k}$ with the maximal index. For refusing repetition of applying of a control phase gate for a control qubit, we use a set $U$ that stores all qubits that have already been used as control qubits during this cascade.

The algorithm for constructing a quantum circuit is as follows.

\begin{itemize}
    \item[] \textbf{Step 1.} We start with the first qubit in the path $j\gets 1$, and initialize $U\gets\emptyset$. We apply the Hadamard transformation to the qubit corresponding to the vertex $v_{i_1}$. We denote this action by $\textsc{H}(v_{i_1})$. If $k=1$, then we terminate our algorithm; otherwise, go to Step 2. 
      \item[]  \textbf{Step 2.} For each $v_t\in \textsc{Neighbors}(v_{i_j})\backslash\{v_{i_{j+1}}\}$, if $v_t\not\in U$, then we apply the control phase gate $CR_d$ with the control $v_{t}$ and the target $v_{i_j}$ qubits, where $d=T_{t}-r$. Note that $v_t$ with the maximal index should be processed in the end.      
      Then, we add $v_{t}$ to the set $U$, i.e. $U\gets U\cup\{v_{t}\}$. If $j=k$, then we go to Step 5, and to Step 3 otherwise.
        \item[]  \textbf{Step 3.} If $v_{i_{j+1}}\not\in U$, then we apply the control phase gate $CR_d$ with the control $v_{i_{j+1}}$ and the target $v_{i_j}$ qubits, where $d=T_{i_{j+1}}-r$. Then, we add $v_{i_{j+1}}$ to the set $U$, i.e. $U\gets U\cup\{v_{i_{j+1}}\}$. After that, we go to Step 4.
     \item[]  \textbf{Step 4.} We apply the SWAP gate to $v_{i_{j}}$ and $v_{i_{j+1}}$, and swap the indexes of qubits for these vertices. In other words, if $w_1=T_{i_j}$ and $w_2=T_{i_{j+1}}$ are indexes of the corresponding original qubits, then we swap $Q_{w_1}$ and $Q_{w_2}$ values, and $T_{i_j}$ and $T_{i_{j+1}}$ values. Then, we update $j\gets j+1$ because the value of the target qubit moves to $v_{i_{j+1}}$. Then, we go to Step 2.
     \item[] \textbf{Step 5.} If $j=k$, then we apply the SWAP gate to $v_{i_{j}}$ and $v_{q}$, and swap the qubit indexes for these vertices similarly to Step 4. Here $v_q$ is the neighbor of $v_{i_{j}}$ with the maximal index, i.e. $q=\max\{j:$ $v_j$ is not excluded,$ v_j\in \textsc{Neighbors}(v_{i_k}), j\neq i_{k-1}\}$
\end{itemize}

Finally, we obtain the $\textsc{CascadeForPath}(P_r,r)$ procedure whose implementation is presented in Algorithm \ref{alg:qft-path} (see Appendix \ref{apx:qft-path}). This procedure constructs the $r$-th part (cascade) of the circuit for QFT for the path $P_r$.

\subsection{The CNOT cost of the Circuit}\label{sec:qft3}
Note that the $CR_d$ gate can be represented using only two CNOT gates and three $R_z$ gates \cite{barenco1995elementary} (see Figure \ref{fig:cp}).

\begin{figure}[H]
\includegraphics[width=0.4\textwidth]{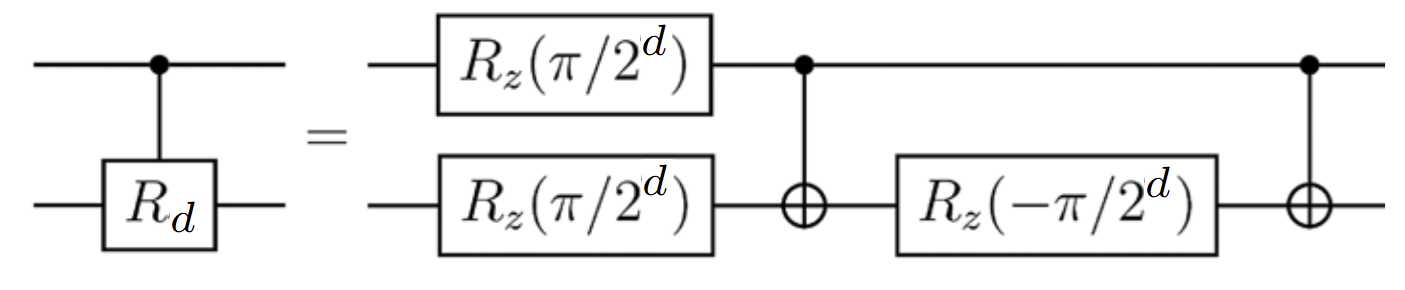}
\caption{\label{fig:cp} Representation of $CR_d$ gate using only basic gates}
\end{figure}  
A pair of $CR_d$ and $SWAP$ gates can be represented using three CNOT gates (see Figure \ref{fig:crswap}).

\begin{figure}[H]
\includegraphics[width=0.4\textwidth]{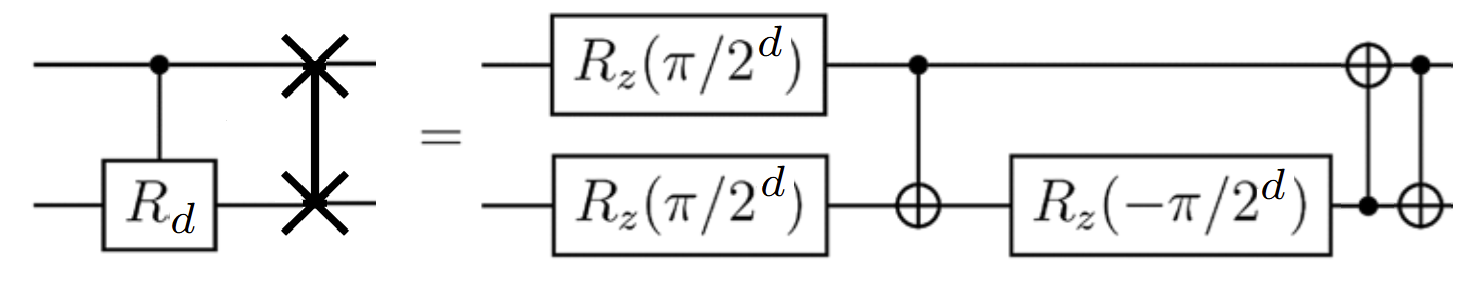}
\caption{\label{fig:crswap} Reduced representation of a pair $CR_d$ and $SWAP$ gates using only basic gates}
\end{figure}  

Let us discuss the CNOT cost of the algorithm in the next theorem.
\begin{theorem}[\cite{kksk2026}]\label{th:qft}
    The CNOT cost of the circuit that is generated using the presented algorithm  is at most $K+n^2-n-1$, where $K=\sum_{r=1}^{n-1}len(P_r)$ is the sum of lengths of the the shortest covering paths $P_r$.
\end{theorem}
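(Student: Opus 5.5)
We count CNOT gates cascade by cascade and then sum over cascades. The cost of each cascade $r$ comes from three kinds of gate groups, which we charge separately: the moves of the target qubit along the edges of $P_r$, the control phase gates $CR_d$ applied to vertices that are covered by $P_r$ but not used as a ``next'' vertex of the path, and the final SWAP in Step 5 that pushes the target qubit off the path to the neighbour $v_q$.

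\textbf{Step 1: the path moves.} For each edge $(v_{i_j},v_{i_{j+1}})$ of $P_r$, Steps 3 and 4 produce either a pair $CR_d$ plus SWAP (when $v_{i_{j+1}}\notin U$) or a bare SWAP. By Figure \ref{fig:crswap} the pair costs $3$ CNOTs, and a bare SWAP also costs $3$. So the path moves contribute at most $3\,len(P_r)$.

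\textbf{Step 2: the remaining control phase gates, via a charging argument.} In cascade $r$ the active qubits are the $n-r+1$ non-excluded ones, so the target must interact with exactly $n-r$ controls. Each control is put into $U$ at most once. Two kinds of controls must be distinguished:
\begin{itemize}
\item a control entering $U$ through Step 3 is absorbed into the $3$-CNOT pair already counted in Step 1;
\item a control entering $U$ through Step 2 costs $2$ CNOTs.
\end{itemize}
To avoid double counting, we write the cost of a pair as $3=2+1$: the $2$ is charged to the control, and the $1$ is charged to the edge. A bare SWAP is charged $3$ to its edge. Every edge of $P_r$ whose head is newly visited hands one control to $U$. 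Hence the number of edges carrying a bare SWAP is at most $len(P_r)$ minus the number of Step 3 controls. Summing, the cascade costs at most
\[
2(n-r)+len(P_r)+2\,(\#\text{bare SWAPs}),
\]
plus $3$ for the final SWAP of Step 5.

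\textbf{Step 3: bounding the bare SWAPs.} Here we would use minimality of $P_r$: a shortest covering path revisits vertices only when it is forced to, and the backtracking edges are what produce bare SWAPs.

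\textbf{The main obstacle.} A naive count gives $3len(P_r)+2(n-r)+3$ per cascade. Summed over cascades this gives $3K+n^2-n+O(n)$, which is weaker than the stated $K+n^2-n-1$. Reaching coefficient $1$ on $K$ requires showing that the surplus $2\,len(P_r)$ is cancelled. The first thing to try is to merge the final SWAP into the last $CR_d$ on $v_q$ (which is processed last by construction), as in Figure \ref{fig:crswap}. A second cancellation to look for is between the first CNOT of a SWAP in cascade $r+1$ and the last CNOT of cascade $r$ on the same edge.

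\textbf{Summation.} Finally we sum over $r=1,\dots,n-1$. This gives $\sum_r 2(n-r)=n^2-n$. The $-1$ comes from the last cascade, which consists of a single $CR_d$ on an edge and needs no final SWAP. Since this bound is taken from \cite{kksk2026}, we would follow their accounting for these cancellations.
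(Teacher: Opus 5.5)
\textbf{Verdict: there is a genuine gap.} The decomposition is the right one, but the proposal stops exactly where the content of the theorem lies. You reduce cascade $r$ to $2(n-r)+len(P_r)+2b_r$ plus the Step~5 SWAP, where $b_r$ is the number of bare SWAPs, and then defer both extra terms to ``their accounting''. The paper only cites the bound, so the proposal as written proves nothing beyond the naive estimate.

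\textbf{The constant.} Your summation silently drops the Step~5 term. Merging it with the last $CR_d$ on $v_q$ is correct; that is why Step~2 processes the maximal-index neighbour last. Even merged, the pair still costs one CNOT more than a lone $CR_d$, in each of the $n-2$ cascades that end with a SWAP. With $len(P_r)$ counting edges, as you use it, you therefore get $K+n^2-2$, and the stated bound is then false. On a star every $P_r$ is the centre alone, so $K=0$, while the generated circuit costs $\sum_{r=1}^{n-2}(2(n-r)+1)+2=n^2-2>n^2-n-1$.

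The constant works only if $len(P_r)$ is the number $k_r$ of vertices of $P_r=(v_{i_1},\dots,v_{i_k})$, the same convention the paper uses in Theorem~\ref{th:qh1}. Then:
\begin{itemize}
\item cascade $r\le n-2$ performs exactly $k_r$ SWAPs ($k_r-1$ along the path plus the one into $v_q$);
\item if each SWAP picks up a fresh control it costs $3$, and the other $n-r-k_r$ controls cost $2$ each, so the cascade costs $3k_r+2(n-r-k_r)=k_r+2(n-r)$;
\item the last cascade is a lone $CR_d$ costing $2=k_{n-1}+2\cdot 1-1$.
\end{itemize}
Summing over $r=1,\dots,n-1$ gives exactly $K+n^2-n-1$, which is tight on the star.

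\textbf{The bare-SWAP term.} The term $2\sum_r b_r$ cannot be removed by the means you suggest.
\begin{itemize}
\item After every SWAP, the vertex the target leaves holds a qubit that has already acted as a control. So each step of $P_r$ back onto an already visited vertex is a bare SWAP: $3$ CNOTs with no $CR_d$ to absorb them.
\item Minimality of $P_r$ does not exclude such steps. Take a spider with three arms of length $3$ (a tree, hence a cactus): every 1-covering path must pass through the centre at least twice. The $d_l$ round trips of Section~\ref{sec:tools} are precisely such backtracks.
\item Your cross-cascade cancellation also fails. The CNOTs of the final merged pair act on $v_q$, which is excluded from all later cascades, so nothing in cascade $r+1$ can cancel against them.
\end{itemize}
The coefficient $1$ on $K$ is therefore exactly the ``every SWAP brings in a fresh control'' count. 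A complete proof must either state that hypothesis explicitly, or add $2$ per bare SWAP to the bound. The hypothesis would be: no revisits on any $P_r$, and Step~2 never consumes a vertex the path will step onto later.
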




We have two corollaries from this result. Firstly, we can estimate $K$ as $nk - 0.5k^2+1.5k$, where $k$ is the length of a the shortest covering path in the graph $G$. We present this result in Corollary $\ref{cr:maxK}$. Then, we obtain the minimal and maximal bounds for the CNOT cost in Corollary \ref{cr:bounds}. 

\begin{corollary}[\cite{kksk2026}]\label{cr:maxK}
 The CNOT cost of the circuit that is generated using the presented algorithm  is at most $nk-0.5k^2-1.5k+n^2-n$, where $k$ is the length of a the shortest covering path in the graph $G$. 
\end{corollary}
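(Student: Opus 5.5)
The plan is to combine Theorem~\ref{th:qft} with an explicit upper bound on $K=\sum_{r=1}^{n-1}len(P_r)$. Theorem~\ref{th:qft} gives CNOT cost at most $K+n^2-n-1$. It therefore suffices to show
\[K\le nk-0.5k^2-1.5k+1,\]
and I will in fact aim for the slightly stronger $K\le nk-0.5k^2-1.5k$. The idea is to bound every term $len(P_r)$ by two quantities and sum the smaller one: $len(P_r)\le k$, because the graph only shrinks, and $len(P_r)\le n-r-1$, because the graph becomes small.

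\textbf{Step 1 (bound by $k$).} I would show that at the beginning of cascade $r$ the remaining, non-excluded graph $G_r$ has a covering path of length at most $k$, where $k$ is the length of the shortest covering path of $G$.

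\begin{itemize}
\item Graph $G_r$ is obtained from $G$ by excluding $r-1$ vertices. Each excluded vertex is a neighbour of the last vertex of the previous path and was not visited by that path.
\item I would argue, by induction on $r$, that these deletions keep $G_r$ connected.
\item I would also argue that the restriction of the previous covering path, with the excluded vertex dropped, is still a covering path. Since $P_{r-1}$ never visits the excluded vertex, the walk survives intact, and the neighbourhoods it covers lose only excluded vertices.
\end{itemize}

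Hence the shortest covering path of $G_r$ is no longer than that of $G_{r-1}$, and so $len(P_r)\le k$.

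\textbf{Step 2 (bound by the size of $G_r$).} Graph $G_r$ has $n-r+1$ vertices, and I want $len(P_r)\le n-r-1$, i.e. at most $|V(G_r)|-2$. A covering path need not visit the vertex that receives the target qubit at the end, nor some other boundary vertex. I would try to prove this by taking a spanning tree with a longest root-to-leaf structure and covering its non-leaf part. This is the step I expect to be the main obstacle: Lemma~\ref{lm:len-wnsh} only gives $2m-3$ for a graph with $m$ vertices, and a non-simple walk may revisit vertices.

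So I would first try to exploit the minimality of $P_r$ together with the specific structure of the excluded vertices. If that fails, I would use a weaker but sufficient fact: for the last $k$ cascades the minimal length decreases by at least one each time. Removing the endpoint neighbour lets the previous path be shortened by one edge. This again yields the arithmetic sequence $k-1,k-2,\dots,0$ for those cascades.

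\textbf{Step 3 (summation).} With $len(P_r)\le\min(k,\,n-r-1)$, I split the sum at $r=n-k-1$:
\[K\le (n-k-1)k+\sum_{s=0}^{k-1}s=(n-k-1)k+\frac{k(k-1)}{2}=nk-0.5k^2-1.5k.\]
Substituting into Theorem~\ref{th:qft} gives CNOT cost at most
\[nk-0.5k^2-1.5k+n^2-n-1\le nk-0.5k^2-1.5k+n^2-n,\]
which is the claimed bound.
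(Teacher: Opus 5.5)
Your reduction to Theorem~\ref{th:qft}, Step~1 and the closing arithmetic are fine. Connectivity of $G_r$ is even automatic, since $P_{r-1}$ lies in $G_r$ and dominates it. The paper only cites this corollary and gives no proof, and bounding $len(P_r)$ termwise by $\min(k,n-r-1)$ is the natural route. The gap is Step~2: for the non-simple $1$-covering walks used here, $len(P_r)\le |V(G_r)|-2$ is false, and so is your fallback.

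Take a centre $c$ with paths $c,a_1,\dots,a_4$ and $c,b_1,\dots,b_4$, and a path $c,y_1,y_2,y_3$ with two pendant leaves at $y_3$, so $n=14$. A $1$-covering walk must contain $a_3$ or $a_4$, $b_3$ or $b_4$, and $y_3$ (or both leaves). These are three vertices at pairwise distance at least $6$, so $k=12=n-2$, attained by $a_3\to c\to b_3\to c\to y_3$. If cascade $1$ uses this walk, the excluded vertex is one leaf of $y_3$. The other leaf still forces $y_3$, so $len(P_2)=12>n-3$: excluding one of several unvisited neighbours of the last vertex need not shorten anything. Your split at $r=n-k-1$ also tacitly assumes $k\le n-2$, whereas Lemma~\ref{lm:len-wnsh} only gives $k\le 2n-3$. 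For example, the spider with three legs of length $6$ has $n=19$ and $k=20$.

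A finer argument cannot repair this, because the inequality $K\le nk-0.5k^2-1.5k+1$ that you need fails for $1$-covering walks. In the spider with three legs of length $6$, orient each shortest walk so that it ends on a longest leg. The lengths of $P_1,\dots,P_{17}$ are then $20,18,17,16,14,13,12,10,9,8,6,5,4,2,1,0,0$, so $K=155>151$. Now attach three more leaves to the depth-$5$ vertex of one leg and let the first four cascades end there. This gives $n=22$, $k=20=n-2$ and $K=215>211$, so the failure persists even when $k\le n-2$.

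What your argument actually uses is that each $P_r$ is a \emph{simple} path. Then its $len(P_r)+1$ vertices are distinct and the vertex $v_q$ receiving the target lies off the path. Hence $len(P_r)\le n-r-1$ (and $k\le n-2$) is immediate, Step~1 carries over verbatim, and your summation gives the bound with one to spare. So you need to impose that hypothesis explicitly. For the non-simple paths of Section~\ref{sec:tools}, the estimate does not follow from Theorem~\ref{th:qft}.
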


\begin{corollary}[\cite{kksk2026}]\label{cr:bounds}
The CNOT cost of a circuit that is generated usingthe presented algorithm  is in the range between $n^2-2n-2$ and $2n^2-2n-2$. 
\end{corollary}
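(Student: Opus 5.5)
The statement to prove is Corollary \ref{cr:bounds}: the CNOT cost of the generated QFT circuit lies between $n^2-2n-2$ and $2n^2-2n-2$. The plan is to derive both ends from the cost bound of Theorem \ref{th:qft}, $K+n^2-n-1$ with $K=\sum_{r=1}^{n-1}len(P_r)$. The upper end needs an upper estimate of $K$. The lower end cannot come from Theorem \ref{th:qft}, which is only an upper bound, so it needs a direct count of the CNOT gates that every generated circuit must contain.

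\textbf{Upper bound.} Before cascade $r$, exactly $r-1$ vertices have been excluded (Step 4 of the main algorithm). So $P_r$ is a covering path in the connected subgraph of the $n-r+1$ remaining vertices; I will check that the construction keeps this subgraph connected. Applying Lemma \ref{lm:len-wnsh} to that subgraph gives $len(P_r)\le 2(n-r+1)-3$ whenever $n-r+1\ge 2$. Summing over $r=1,\dots,n-1$ and substituting $s=n-r+1$ gives
\[K\le \sum_{s=2}^{n}(2s-3)=n^2-2n+1.\]
Hence the cost is at most $(n^2-2n+1)+n^2-n-1=2n^2-3n$, which is at most $2n^2-2n-2$ for $n\ge 2$. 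This is the claimed upper bound, with slack, so the crude Euler-tour estimate suffices.

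\textbf{Lower bound.} I will count gates in the circuit directly. Cascade $r$ must apply a controlled-phase gate $CR_d$ from each of the $n-r$ later qubits onto qubit $r$. That gives $\sum_{r=1}^{n-1}(n-r)=\frac{n^2-n}{2}$ gates $CR_d$ in total. Each $CR_d$ is realized in one of two ways:
\begin{itemize}
\item standalone, with $2$ CNOTs (Figure \ref{fig:cp});
\item merged with the following SWAP, with $3$ CNOTs (Figure \ref{fig:crswap}).
\end{itemize}
In the merged case the three CNOTs still cover the $CR_d$ and are at least its $2$. So every $CR_d$ contributes at least $2$ CNOTs that no other $CR_d$ shares, which gives at least $n^2-n$ CNOTs. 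The only places this could drop are the boundary effects of the construction:
\begin{itemize}
\item the final cascade, where only two vertices remain (Step 6);
\item the final SWAP in Step 5, which moves the target qubit off the path.
\end{itemize}
I would bound the total loss from these effects by $n+2$, which gives the stated $n^2-2n-2$. Alternatively, I would note that $n^2-n>n^2-2n-2$, so once the per-gate accounting is justified the stated bound follows immediately.

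\textbf{Main obstacle.} The delicate step is the lower bound. I must confirm that merging a $CR_d$ with a SWAP never makes two $CR_d$ gates share CNOTs, so that the $2$-per-gate count is sound. I must also confirm that cancellations between consecutive cascades, such as adjacent CNOTs annihilating, cannot remove more than the slack of roughly $n+2$. I would first write out the gate sequence of one cascade in Step 2 through Step 5 and check the boundaries between cascades for possible CNOT cancellations.
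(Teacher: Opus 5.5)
Your argument is a valid derivation from the results the paper states, but it is not the route the paper points to. The paper gives no proof of Corollary~\ref{cr:bounds}; it is quoted from \cite{kksk2026}. The surrounding text says both bounds are read off from Theorem~\ref{th:qft} through the estimate of $K$ in Corollary~\ref{cr:maxK}, i.e.\ in terms of the length $k$ of a shortest covering path of the original $G$.

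\textbf{Upper end.} You instead apply Lemma~\ref{lm:len-wnsh} cascade by cascade to the shrinking graph. This is more elementary and only needs the remaining graph to stay connected. That is immediate: by Step~4 of the main algorithm the excluded $v_q$ lies off $P_r$, and $P_r$ dominates the current graph. So after deleting $v_q$ the path is intact and every remaining vertex lies on it or next to it. What you lose is the dependence on $k$ that Corollary~\ref{cr:maxK} keeps.

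\textbf{Lower end.} You correctly note that Theorem~\ref{th:qft} is an ``at most'' statement. The paper's indicated route therefore gives a lower bound only if that formula is tacitly read as exact. Your direct count of the $n(n-1)/2$ controlled-phase gates, each realised in its own gadget with at least $2$ CNOTs, is the right argument and gives the stronger bound $n^2-n$. The obstacle you flag is not real. The generated circuit is literally a concatenation of the gadgets of Figures~\ref{fig:cp} and~\ref{fig:crswap} and $3$-CNOT SWAPs, with no cancellation across gadgets, and the boundary steps only add CNOTs. Drop the ``loss of at most $n+2$'' sentence and state the bound outright.

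\textbf{Two cautions on the upper end.}

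First, your claimed slack depends on $len$ counting edges, as in the Preliminaries. If $len$ counts path vertices, as $k$ does in Theorem~\ref{th:qh1}, Lemma~\ref{lm:len-wnsh} gives only $K\le n^2-n$, hence a total of $2n^2-2n-1$, one too many. Touring only the non-leaf part of a spanning tree uses at most $2s-5$ vertices for $s\ge 3$. That gives $K\le n^2-4n+5$ and a total of $2n^2-5n+4$, which is safe under either reading.

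Second, and more seriously, your upper bound is exactly as good as Theorem~\ref{th:qft}, which you, like the paper, use as a black box. Your own gadget bookkeeping shows it is fragile. The value $K+n^2-n-1$ never exceeds the cost obtained when every SWAP is merged with a $CR_d$. But when $P_r$ is non-simple, a SWAP that moves the target back onto a vertex whose qubit has already served as a control cannot be merged, and it costs $3$ CNOTs instead of $1$. Two examples:
\begin{itemize}
\item On the spider with centre $c$ and legs $c$--$a_i$--$b_i$ ($i=1,2,3$), the first cascade's path $a_1,c,a_2,c,a_3$ already forces two such SWAPs.
\item On a spider with four legs of length $L$ ($n=4L+1$) there are about $2L$ of them per cascade. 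The total is then roughly $36L^2\approx\frac{9}{4}n^2$, which exceeds $2n^2-2n-2$ for large $L$.
\end{itemize}
So present the upper bound explicitly as conditional on the cited theorem. It does not follow from the gate-level accounting you use for the lower bound.
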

\end{document}